\documentclass[conference]{IEEEtran}

\usepackage{graphicx}
\usepackage{epstopdf}
\usepackage{listings}
\usepackage{xcolor}
\usepackage{amsmath, amsfonts, amssymb}
\usepackage{graphicx,wrapfig,floatflt}


\usepackage{array}
\usepackage{mdwmath}
\usepackage{mdwtab}
\usepackage{eqparbox}
\usepackage{bm,bbm,amssymb,amsthm,url}

\usepackage{algorithm}
\usepackage[noend]{algpseudocode}

\makeatletter
\def\BState{\State\hskip-\ALG@thistlm}
\makeatother

\ifCLASSOPTIONcompsoc \usepackage[caption=false,font=normalsize,labelfont=sf,textfont=sf]{subfig}
\else
\usepackage[caption=false,font=footnotesize]{subfig}
\fi

\hyphenation{op-tical net-works semi-conduc-tor}

\numberwithin{equation}{section}

\newcommand{\MAT}{\left[ \begin{array}}  
\newcommand{\mat}{\end{array} \right]}

\newtheorem{Lemma}{Lemma}[section]
\newtheorem{Theorem}{Theorem}[section]

\newtheorem{Proposition}{Proposition}

\def \minimize {\operatorname*{minimize}}

\def \tr {\operatorname*{Tr\ }}

\def \Re {\operatorname*{Re}}
\def \diag {\operatorname*{diag}}

\def \tPPhi {\mathbf{\Phi}}
\def \ts {\s}
\def \thX {\mathbf{\hat{\X}}}
\def \PPhi {\mathbf{\Phi}}
\def \pphi {\mathbf{\phi}}
\def \0 {\mathbf{0}}
\def \a {\bm{a}}

\def \A {\mathbf{A}}

\def \b {\bm{b}}
\def \B {\mathbf{B}}

\def \c {\bm{c}}
\def \C {\mathbf{C}}

\def \D {\mathbf{D}}

\def \e {\bm{e}}

\def \h {\bm{h}}

\def \H {\mathbf{H}}
\def \I {\mathbf{I}}

\def \L {\mathcal{L}}

\def \n {\bm{n}}

\def \P {\mathbf{P}}

\def \R {\mathbf{R}}

\def \s {\bm{s}}

\def \v {\bm{v}}

\def \x {\bm{x}}

\def \X {\mathbf{X}}

\def \y {\bm{y}}

\def \z {\bm{z}}


\hyphenation{op-tical net-works semi-conduc-tor}

\begin{document}
%
\title{Support Recovery for Sparse Signals with Unknown Non-stationary Modulation}
%
%
%

\author{Youye~Xie,\IEEEmembership{~Student Member,~IEEE,}
        Michael~B.~Wakin,\IEEEmembership{~Senior Member,~IEEE,}
        and Gongguo~Tang\IEEEmembership{~Member,~IEEE}
        \\
        Department
of Electrical Engineering, Colorado School of Mines, USA
\thanks{The authors are with the Department
of Electrical Engineering, Colorado School of Mines, Golden, CO 80401, USA, e-mails: youyexie@mines.edu, mwakin@mines.edu@mines.edu, gtang@mines.edu.}}

%
%

\markboth{Journal of \LaTeX\ Class Files,~Vol.~14, No.~8, August~2015}%
{Shell \MakeLowercase{\textit{et al.}}: Bare Demo of IEEEtran.cls for IEEE Journals}
%



\maketitle

\begin{abstract}
The problem of estimating a sparse signal from low dimensional noisy observations arises in many applications, including super resolution, signal deconvolution, and radar imaging. In this paper, we consider a sparse signal model with non-stationary modulations, in which each dictionary atom contributing to the observations undergoes an unknown, distinct modulation. By applying the lifting technique, under the assumption that the modulating signals live in a common subspace, we recast this sparse recovery and non-stationary blind demodulation problem as the recovery of a column-wise sparse matrix from structured linear observations, and propose to solve it via block $\ell_{1}$-norm regularized quadratic minimization. Due to observation noise, the sparse signal and modulation process cannot be recovered exactly. Instead, we aim to recover the sparse support of the ground truth signal and bound the recovery errors of the signal's non-zero components and the modulation process. In particular, we derive sufficient conditions on the sample complexity and regularization parameter for exact support recovery and bound the recovery error on the support. Numerical simulations verify and support our theoretical findings, and we demonstrate the effectiveness of our model in the application of single molecule imaging.
%
\end{abstract}

\begin{IEEEkeywords}
Support recovery, blind demodulation, sparse matrix recovery, group lasso, compressive sensing
\end{IEEEkeywords}

%
\IEEEpeerreviewmaketitle

\section{Introduction}

\subsection{Overview}
\footnote{This paper was presented in part at the 53rd Asilomar Conference on Signals, Systems, and Computers, 2019 \cite{xie2019asilomar}.}The problem of recovering a high dimensional sparse signal from its low dimension observations using a fixed sensing mechanism arises naturally in a wide range of applications, including radar autofocus \cite{mansour2018radar}, magnetic resonance imaging \cite{lustig2007sparse}, and video acquisition \cite{mansour2012adaptive}. Typically, the system receives a low dimensional signal $\y=\D\A\c\in\C^{N}$, where $\c\in\C^{M}$ ($M>N$) is an unknown high dimensional signal, and $\D$ and $\A$ are  known sensing matrices. Although the sensing process is under-determined, one can solve for $\c$ by leveraging its sparsity; this sparse recovery problem has been studied extensively by the compressive sensing community \cite{candes2008introduction,candes2011probabilistic,xie2017radar}.

When $\D\in\C^{N\times N}$ is a diagonal matrix containing a sampled carrier signal along its diagonal, it describes a modulation process, and thus recovery with unknown diagonal $\D$ is sometimes referred to as simultaneous sparse recovery and blind demodulation \cite{xie2019simultaneous}. Scenarios where $\D$ is unknown arise in certain self-calibration \cite{ling2015self} and blind deconvolution problems \cite{ahmed2014blind}.

In this paper, we further generalize this model, allowing each atom in the dictionary matrix $\A$ to undergo a distinct modulation process, rather than multiplication by the same matrix $\D$. We refer to this generalized scenario as {\em non-stationary modulation}. Moreover, we suppose that the observation is contaminated with random noise. Although we no longer expect to recover the sparse vector $\c$ and modulating signals (which we denote as $\D_j$) exactly due to the existence of noise, we focus on recovering the sparse support of $\c$ and on bounding the recovery error of $\c$ and $\D_j$. By employing the lifting technique and under the assumption that the modulating signals live in a known, common subspace, we recast our problem as the recovery of a column-wise sparse matrix from structured linear observations. Under this formulation, there are no unknown parameters in the lifted linear operator. We solve the support recovery problem by solving a block $\ell_l$-norm ($\ell_{2,1}$-norm) regularized quadratic minimization problem, which is also known as the group lasso in the statistics literature \cite{zou2006adaptive,yuan2006model}. The generalized model encompasses a wide range of applications, including direction of arrival (DOA) estimation for an antenna array with DOA sensitive channel responses \cite{friedlander2014bilinear}, frequency estimation with damping in nuclear magnetic resonance spectroscopy \cite{yang2016super}, and CDMA communication with a spreading sequence sensitive channel \cite{ling2015self}. To give a concrete example, we apply the proposed model to single molecule imaging \cite{tang2013sparse} in Section \ref{SMI}.

\subsection{Setup and Notation\label{notation}}

Throughout the paper, we represent matrices, vectors, and scalars as bold uppercase, $\X$, bold lower case, $\x$, and non-bold letters, $x$, respectively. We use the symbol $C$ to denote numerical constants that might vary from line to line. Given a support set $T$, the notation $\X_T$ represents the restriction of $\X$ to the columns indexed by $T$, and the notation $\x_T$ represents the restriction of $\x$ to the entries indexed by $T$. Moreover, we use $||\cdot||$ to denote the spectral norm, which returns the maximum singular value of a matrix, and $||\cdot||_F$ to denote the Frobenius norm. For a matrix $\X=[\x_1, \x_2,\cdots,\x_M]\in\C^{K\times M}$, we define $||\X||_{2,1}=\sum_{j=1}^M||\x_j||_2$ and $||\X||_{2,\infty}=\max_{j}||\x_j||_2$. In addition, later in the paper we will have the vectorized subgradient, $\s\in \C^{KM\times 1}$, of a function with respect to its matrix input $\X\in\C^{K\times M}$, and we define $||\s||_{2,\infty} = \max_j||\s_j||_2$ where $\s_j$ is the subgradient with respect to $\x_j$.

\subsection{Problem Formulation}
In this paper, we consider the following generalized signal model with an unknown coefficient vector and non-stationary modulation process. Specifically, the observations consist of a contaminated composite signal
\begin{align}
\y=\sum^{M}_{j=1}c_j\D_j\a_j + \n\in\C^N.
\label{origin}
\end{align}
Here $c_j\in\C$ is an unknown scalar, $\D_j\in\C^{N\times N}$ is an unknown modulation matrix which is non-stationary as it depends on $j$, $\a_j$ is a dictionary atom coming from a dictionary matrix $\A=[\a_1, \a_2,\cdots,\a_M]\in\C^{N\times M}$, and $\n\in\C^{N\times 1}$ is additive random Gaussian noise whose real and imaginary entries follow the i.i.d Gaussian distribution with mean $0$ and variance $\sigma^2$.

Since there are more unknown parameters than the number of observations in the model \eqref{origin}, to make the recovery problem well-posed, we assume that at most $J$ $(<M)$ of the coefficients $c_j$ are non-zero and that the diagonal modulation matrices, $\D_j$, live in a common $K$-dimension subspace
\begin{align}
\mathbf{D}_j = \diag(\mathbf{B}\h_j)
\label{eq:sc}
\end{align}
where $\B\in\C^{N\times K}$ $(N>K)$ is a known basis for the subspace with orthonormal columns, and $\h_j\in\C^{K\times 1}$ are unknown coefficient vectors. Similar subspace assumptions can be found in the deconvolution and demixing literature \cite{ling2017blind, chi2016guaranteed}. Recovering $c_j$ and $\h_j$ from $\y$ is a bilinear inverse problem \cite{ling2018self,aghasi2018convex}.

To combat the difficulties resulting from the bilinearity, we apply the lifting trick \cite{xie2019simultaneous, ling2017blind,asif2009random}, which collects the unknown parameters into a matrix $\X=[c_1\h_1\quad c_2\h_2\quad\cdots\quad c_M\h_M]\in\C^{K\times M}$. By using Proposition~1 in~\cite{xie2019SPARSE} we can show that, when $\n=\mathbf{0}$, the observation model \eqref{origin} takes the following equivalent form:
\begin{align}
\y(n)=\b'^H_n\X\a'_n, n = 1, \ldots, N.
\label{L}
\end{align}
where $\b'_n$ and $\a_n'$ are the $n$-th column of $\B^H$ and $\A^T$ respectively. We write \eqref{L} succinctly as $\y=\L(\X)$ with $\L$ being a properly defined linear operator. And the adjoint of the linear operator $\L$ is $\L^*(\y)=\sum_{l=1}^Ny_l\b'_l\a'^H_{l}$. The matrix $\X$ incorporates the unknown sparse signal and modulation process with at most $J$~$(<M)$ non-zero columns. The support recovery problem we study in this paper aims to determine the indices, $j$, of the non-zero columns in $\X$ from the observation vector $\y$. We also aim to bound the recovery error of $\X$ in terms of the $\ell_{2,\infty}$-norm. If we assume there is no trivial null modulation, namely all $\D_j\neq0$, finding the indices of the non-zero columns of $\X$ is equivalent to recovering the support of $\c$. Moreover, note that due to the scaling ambiguity between $c_j$ and $\h_j$, the recovery error bound is expressed with respect to their multiplication $c_j\h_j$.

A natural way to recover the ground truth $\X_0$ from $\y$ is to exploit its sparse property and solve the following $\ell_{2,1}$-norm regularized quadratic minimization problem
\begin{equation}
\begin{aligned}
&\minimize_{\X\in\C^{K\times M}} \frac{1}{2}||\y-\L(\X)||_2^2+\lambda||\X||_{2,1}.
\label{21}
\end{aligned}
\end{equation}
Alternatively, we can write
\eqref{21} equivalently as
\begin{equation}
\begin{aligned}
&\minimize_{\X\in\C^{K\times M}} \frac{1}{2}||\y-\PPhi\cdot\text{vec}(\X)||_2^2+\lambda\sum_{i=1}^M||\x_i||_{2}.
\label{glasso}
\end{aligned}
\end{equation}
Here $\L(\X)=\PPhi\cdot\text{vec}(\X)$ with
\begin{equation}
\begin{aligned}
\PPhi=[ \pphi_{1,1}\quad\cdots\quad\pphi_{K,1}\quad\cdots\quad\pphi_{1,M}\quad\cdots\quad&\pphi_{K,M} ]\\
&\in\C^{N\times KM}
\label{phi}
\end{aligned}
\end{equation}
and $\pphi_{i,j}=\text{diag}(\b_i)\a_j\in\C^{N\times 1}$, where $\b_i$ is the $i$-th column of $\B$. Moreover, we denote the set containing the indices of the non-zero columns of the ground-truth matrix $\X_0$ as $T := T(\X_0)$ with $|T|=J$ and its complement as $T^C$. Due to the special block structure of $\PPhi$, when using the subscript notation $\PPhi_T$ we refer to the $N\times KJ$ sub-matrix of $\PPhi$ containing the $K(j-1)+1$ to $K(j-1)+K$-th columns for all $j\in T$.

\subsection{Main Contributions}

Our contributions are twofold. First, we propose to apply $\ell_{2,1}$-norm regularized quadratic minimization to recover the support of the generalized signal model in \eqref{origin}. Second, we derive sufficient conditions under which, with overwhelming probability, the support of the recovered signal is a subset of the support of the ground truth. More precisely, we show that the required number of observations, $N$, is proportional to the number of degrees of freedom, $O(JK)$, up to logarithmic factors. Moreover, the regularization parameter, $\lambda$, should be chosen to be proportional to the $\sigma$ of the noise. We also bound the error in recovering the non-zero columns of the ground truth as measured in the $\ell_{2,\infty}$-norm. With an additional assumption on the ground truth signal, all conditions lead to exact support recovery.

\subsection{Related Work}

The $\ell_{2,1}$-norm constrained quadratic minimization problem, also known as the group lasso in statistics literature \cite{zou2006adaptive,yuan2006model,nardi2008asymptotic}, has been widely studied. However, under our particular signal model \eqref{origin}, the linear operator $\PPhi$ contains randomness and has a special block structure as presented in \eqref{phi}, which distinguishes our work from other group lasso research. For example, \cite{yuan2006model} assumes each block of $\PPhi$, $[\phi_{1,j},...,\phi_{K,j}]$, to be orthonormal. \cite{lv2011group} considers the adaptive group lasso and derives sufficient support recovery conditions using the block coherence of a deterministic $\PPhi$. \cite{huang2010benefit} allows varying block sizes but still assumes a deterministic $\PPhi$. \cite{sivakumar2015beyond} assumes that $\PPhi$ has independent sub-exponential rows which is not consistent with our formulation, and they bound the recovery error in terms of $\ell_2$-norm instead of $\ell_{2,\infty}$-norm as in our theorem. Moreover, \cite{vaiter2015model,vaiter2017model} provide a general recovery analysis for regression problems regularized with partly smooth functions relative to a manifold defined in \cite{vaiter2015model}, which encompasses the $\ell_{2,1}$-norm. However, the precise bounds on the regularization parameter and sample complexity for exact support recovery with $\PPhi$ defined in \eqref{phi} are not derived, and that work bounds the error in terms of $\ell_2$-norm instead of the $\ell_{2,\infty}$-norm.

As for the signal model itself, the model we study is closely related to certain works in self-calibration and blind deconvolution \cite{ling2015self,ahmed2014blind}. The work in \cite{chi2016guaranteed} considers a similar model except that the dictionary therein consists of all sampled sinusoids over a continuous frequency range, and its modulating waveforms, $\D_j$, are all the same. As an extension, \cite{yang2016super} allows non-stationary modulating waveforms but still concerns the sinusoid dictionary. The fact that \cite{yang2016super} considers a more general signal model than \cite{chi2016guaranteed} actually facilitates the derivation of a near optimal result on the sufficient sample complexity. Our work in this paper similarly benefits from expanding the signal model of \cite{ling2015self}. Specifically, our model fits into the self-calibration problem \cite{ling2015self} when all $\D_j$ are the same. However, in the noisy case, \cite{ling2015self} does not aim to recover the support and only bounds the error in terms of the $\ell_2$-norm. \cite{ling2017blind} generalizes the model in \cite{ling2015self} and can be interpreted as the self-calibration with multiple sensors, while allowing varying calibration parameters. However, \cite{ling2017blind}  studies a constrained nuclear norm minimization problem with bounded noise and requires knowing the number of sensors. Additional related models for different applications, all requiring the same modulation matrix, are available in \cite{ahmed2014blind,hung2017low,eldar2017sensor,flinth2018sparse}.

We have also previously studied the sparse recovery and blind demodulation problem \cite{xie2019simultaneous,xie2019SPARSE} and numerically compared the support recovery performance of the SparseLift method \cite{ling2015self} and the $\ell_{2,1}$-norm minimization method for direction of arrival estimation in \cite{xie2019simultaneous}. In those works, however, we assume either zero or bounded additive noise, whereas we consider random Gaussian noise in this paper. Moreover, in \cite{xie2019simultaneous,xie2019SPARSE} we solve a constrained $\ell_{2,1}$-norm minimization problem due to the consideration of bounded noise. The regularized formulation used in this paper is a natural choice when considering unbounded noise \cite{hastie2015statistical} and is more convenient for support recovery analysis. Finally, in those papers, we derive the recovery error bound in terms of the $\ell_2$-norm and do not study the question of exact support recovery when noise is involved.

\vspace{2mm}
The rest of the paper is organized as follows. In Section \ref{mainresult}, we present our main theorem regarding the support recovery problem. The detailed proof of the main theorem is shown in Section \ref{proofofmaintheorem}. Several simulations and an experiment are conducted in Section \ref{simulations} to demonstrate the important scaling relationships and the effectiveness of our model in practical application. Finally, we conclude this paper in Section \ref{conclusion}.

\section{Main Result\label{mainresult}}

In this section, we present our main theorem, which presents the support recovery conditions and recovery error bound for solving \eqref{21} (or equivalently \eqref{glasso}). In this result, we assume that the dictionary matrix $\A$ is a random Gaussian matrix, by which we mean a matrix whose entries follow the i.i.d standard normal distribution.

\begin{Theorem}
Consider the observation model in equation (\ref{origin}), assume that $\A\in\R^{N\times M}$ $(N<M)$ is a random Gaussian matrix, at most $J$ $(< M)$ coefficients $c_j$ are nonzero, and the real and imaginary parts of each entry of the noise vector $\n\in\C^{N\times 1}$ follow the i.i.d Gaussian distribution with 0 mean and $\sigma^2$ variance. Suppose also that each modulation matrix $\D_j$ satisfies the subspace constraint~\eqref{eq:sc}, where $\B^H\B=\I_K$. If the number of observations
\begin{align}
N\geq C_{\alpha,1}\mu_{max}^2JK\left[\log(M-J)+\log^2(N)\right]
\label{N}
\end{align}
and the regularization parameter
\begin{align}
\lambda\geq\sqrt{C_{\alpha,2}\sigma^2\mu_{max}^2K\left[\log(M-J)+\log(N)\right]}
\label{lambdabelow}
\end{align}
where $C_{\alpha,1}$ and $C_{\alpha,2}$ are constants that grow linearly with $\alpha>1$ and the coherence parameter
\begin{align*}
\mu_{max} = \max_{i,j}\sqrt{N}|\B_{ij}|,
\end{align*}
then the following properties hold with probability at least $1-O(N^{-\alpha+1})$:
\begin{enumerate}
\item Problem \eqref{glasso} has a unique solution $\hat{\X}\in\C^{K\times M}$ with its support, the set of indices of the non-zero columns in $\hat{\X}$, contained within the support $T$ of the ground truth solution, $\X_0$.

\item The recovery error between the solution, $\hat{\X}$, and the ground truth, $\X_0$, satisfies

\begin{equation}
\begin{aligned}
||\hat{\X}-\X_0||_{2,\infty}\leq \sqrt{C_{\alpha}\sigma^2\mu_{max}^2JK\left[\log(J)+\log(N)\right]}\\
+4\sqrt{J}\lambda
\end{aligned}
\label{errorbound}
\end{equation}
where $C_{\alpha}$ is a constant that grows linearly with $\alpha$. If in addition the non-zero columns of $\X_0$ are bounded below
\begin{equation}
\begin{aligned}
\min_{j\in T}||\x_{0,j}||_{2}>\sqrt{C_{\alpha}\sigma^2\mu_{max}^2JK\left[\log(J)+\log(N)\right]}\\
+4\sqrt{J}\lambda,
\end{aligned}
\label{x0below}
\end{equation}
then $\hat{\X}$ and $\X_0$ have exactly the same support which implies exact support recovery.
\end{enumerate}
\label{maintheorem}
\end{Theorem}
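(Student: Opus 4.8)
The plan is to establish both claims via the \emph{primal--dual witness} (PDW) construction, the standard route to support recovery for the group lasso. First I would record the optimality conditions for \eqref{glasso}: a matrix $\hat\X$ is a minimizer if and only if there is a subgradient $\s$ of $||\cdot||_{2,1}$ with $\L^*(\L(\hat\X)-\y)+\lambda\s=\0$, where $\s_j=\hat\x_j/||\hat\x_j||_2$ whenever $\hat\x_j\neq\0$ and $||\s_j||_2\leq 1$ otherwise. To force the recovered support into $T$ and to guarantee uniqueness, it suffices to produce such a pair in which $\hat\X_{T^C}=\0$, the restricted Gram operator $\L_T^*\L_T$ is invertible, and the dual variable is strictly feasible off support, i.e. $||\s_j||_2<1$ for every $j\in T^C$.

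Second I would solve the oracle (support-restricted) problem, fixing $\hat\X_{T^C}=\0$ and minimizing over matrices supported on $T$. Writing $\Delta:=\hat\X_T-\X_{0,T}$ and using $\y=\L_T(\X_{0,T})+\n$, the restricted stationarity condition gives the closed form $\Delta=(\L_T^*\L_T)^{-1}(\L_T^*(\n)-\lambda\s_T)$, which is the source of the on-support error bound \eqref{errorbound}. The key deterministic fact needed here is that $\L_T^*\L_T$ is close to the identity: since $\A$ has i.i.d.\ standard Gaussian entries and $\B^H\B=\I_K$, a direct computation yields $\EEE[\PPhi^H\PPhi]=\I_{KM}$, hence $\EEE[\L_T^*\L_T]=\I_{KJ}$. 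Establishing $||\L_T^*\L_T-\I||\leq 1/2$ with high probability via a matrix concentration inequality (matrix Bernstein, after truncating the heavy tails of the Gaussian--subspace products, which are controlled by $\mu_{max}$) is exactly what drives the sample-complexity requirement \eqref{N}, including the $\log^2(N)$ factor. On this event $(\L_T^*\L_T)^{-1}$ is bounded, and combining it with a conditional Gaussian bound on $||\L_T^*(\n)||$ (conditioning on $\A$, so that $\n$ remains independent) together with $||\s_T||_{2,\infty}=1$ yields \eqref{errorbound} after translating the $\ell_2$/spectral estimates into the $\ell_{2,\infty}$-norm.

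Third I would verify strict dual feasibility. Substituting the oracle solution, the off-support subgradient decomposes as $\s_{T^C}=\L_{T^C}^*\L_T(\L_T^*\L_T)^{-1}\s_T+\frac{1}{\lambda}\L_{T^C}^*(\I-\P)\n$, where $\P=\L_T(\L_T^*\L_T)^{-1}\L_T^*$. I would bound the two contributions separately in $\ell_{2,\infty}$ and show each is below $1/2$. The first is an incoherence (irrepresentability) term: it requires controlling the cross-correlations $\L_{T^C}^*\L_T$ between off- and on-support blocks, which concentrate around zero because distinct columns of $\A$ are independent; a union bound over the $M-J$ off-support columns produces the $\log(M-J)$ terms in \eqref{N} and \eqref{lambdabelow}. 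The second is the noise-leakage term $\frac{1}{\lambda}||\L_{T^C}^*(\I-\P)\n||_{2,\infty}$, which is Gaussian conditional on $\A$; forcing it below $1/2$ is precisely what produces the lower bound \eqref{lambdabelow} on $\lambda$, so that $\lambda$ dominates the typical per-column noise leakage. I expect this third step---uniform strict dual feasibility over all $M-J$ off-support columns, while simultaneously handling the coupling through $(\L_T^*\L_T)^{-1}$ and the randomness in $\A$ shared between the interference and leakage terms---to be the main obstacle, since the relevant summands are products of Gaussian and subspace entries with heavy tails and must be controlled uniformly.

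Finally, the exact-support claim follows from the error bound: the right-hand side of \eqref{x0below} is identical to the $\ell_{2,\infty}$ bound in \eqref{errorbound}, so whenever $\min_{j\in T}||\x_{0,j}||_2$ exceeds it, the reverse triangle inequality applied to $||\hat\x_j-\x_{0,j}||_2\leq||\hat\X-\X_0||_{2,\infty}$ forces $||\hat\x_j||_2>0$ for every $j\in T$. Combined with $\hat\X_{T^C}=\0$ from the PDW construction, the support of $\hat\X$ equals $T$, which gives exact support recovery.
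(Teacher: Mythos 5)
Your proposal is correct and follows essentially the same route as the paper: the primal--dual witness construction, the identical decomposition of the off-support dual variable $\ts_{T^C}$ into an irrepresentability term and a noise-leakage term (the paper's equation \eqref{stcnew}), the isometry condition $||\PPhi_T^H\PPhi_T-\I_T||\leq 1/2$, conditional Gaussian quadratic-form tail bounds that exploit the independence of $\a_i$, $i\in T^C$, from $\PPhi_T$, and union bounds that produce the $\log(M-J)$ and $\log(J)$ factors. Two sub-steps differ, though neither changes the architecture. First, the paper does not re-prove the restricted isometry by matrix Bernstein as you propose; it imports it wholesale from Lemma 4.3 of the SparseLift paper \cite{ling2015self}, which is where the $\log^2(N)$ factor in \eqref{N} originates --- your plan would essentially reproduce that proof. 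Second, and more substantively, for the on-support error bound \eqref{errorbound} you invert the restricted stationarity condition to get $\Delta=(\PPhi_T^H\PPhi_T)^{-1}(\PPhi_T^H\n-\lambda\ts_T)$ and then convert norms, whereas the paper never inverts: it uses strong convexity of the restricted objective (with modulus $1/2$ on the isometry event), H\"older duality between $||\cdot||_{2,1}$ and $||\cdot||_{2,\infty}$, and the inequality $||\X||_F^2\geq||\X||_{2,1}||\X||_{2,\infty}/\sqrt{J}$ to conclude $||\Delta||_{2,\infty}\leq 4\sqrt{J}\,(||\PPhi_{T}^H\n||_{2,\infty}+\lambda)$. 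Your inversion route works equally well here, since $||\Delta||_{2,\infty}\leq||\Delta||_F\leq 2\left(||\PPhi_T^H\n||_2+\lambda||\ts_T||_2\right)\leq 2\sqrt{J}\left(||\PPhi_T^H\n||_{2,\infty}+\lambda\right)$, which even improves the constant from $4$ to $2$; both arguments then reduce to the same tail bound on $||\PPhi_T^H\n||_{2,\infty}$ (the paper's Lemma \ref{phinlemma}). The final step --- deducing exact support recovery from \eqref{errorbound} and the signal-strength condition \eqref{x0below} via the reverse triangle inequality --- coincides with the paper.
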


According to \eqref{errorbound}, we can derive that for any $\hat{\x}_j=\hat{c}_j\hat{\h}_j$ and $\x_{0,j}=c_{0,j}\h_{0,j}$ which are the $j$-th columns of the solution $\hat{\X}$ and the ground truth $\X_0$ respectively, $||\hat{c}_j\hat{\D}_j-c_{0,j}\D_{0,j}||_F=||\hat{c}_j\hat{\h}_j-c_{0,j}\h_{0,j}||_2\leq \sqrt{C_{\alpha}\sigma^2\mu_{max}^2JK\left[\log(J)+\log(N)\right]}+4\sqrt{J}\lambda$. Moreover, since the columns of $\B$ are orthonormal, $\mu_{max}\in[1,\sqrt{N}]$. Given the system parameters and a large enough $N$, \eqref{N} is satisfied when $1\leq\mu_{max}\leq\sqrt{\frac{N}{C_{\alpha,1}KJ\left[\log(M-J)+\log^2(N)\right]}}$. In addition, since we solve the column-wise sparse matrix support recovery problem via the group lasso and bound the recovery error in terms of $\ell_{2,\infty}$-norm, Theorem \ref{maintheorem} may be of interest outside the support recovery problem and shed light on the performance of the group lasso with random block structured linear operators.

\section{Proof of Theorem \ref{maintheorem}\label{proofofmaintheorem}}
We present proof of the main theorem in this section. We first derive the optimality and uniqueness conditions of the solution to \eqref{glasso} and then apply the primal-dual witness method \cite{wainwright2009sharp} to construct a solution and find the conditions regarding the regularization parameter $\lambda$ and number of observations $N$ such that the optimality and uniqueness conditions are satisfied.

\subsection{Optimality and Uniqueness Conditions}

\begin{Lemma}
\leavevmode
\begin{enumerate}
\item A matrix $\hat{\X}\in\C^{K\times M}$ is an optimal solution to \eqref{glasso} if and only if there exists a subgradient vector $\s=\begin{bmatrix}
\s_1\\
\vdots\\
\s_M
\end{bmatrix}\in\text{vec}\left(\partial||\hat{\X}||_{2,1}\right)$, such that
\begin{align*}
\PPhi^H\PPhi\cdot\text{vec}(\hat{\X})-\PPhi^H\y+\lambda\cdot
\begin{bmatrix}
\s_1\\
\vdots\\
\s_M
\end{bmatrix}=\0
\end{align*}
which is equivalent to
\begin{align}
\PPhi^H\PPhi\cdot\left(\text{vec}(\hat{\X})-\text{vec}(\X_0)\right)-\PPhi^H\n+\lambda\s=\0
\label{firstordercondition}
\end{align}
where $\s_i\in\C^{K}$ is the subgradient of $||\cdot||_2$ at $\hat{\x}_i$ defined as
\begin{align}
\s_i = \left\{ \begin{array}{ll}
         \frac{\hat{\x}_i}{||\hat{\x}_i||_2} & \mbox{if $||\hat{\x}_i||_2\neq 0$};\\
        \{\z:||\z||_2\leq 1\} & \mbox{if $||\hat{\x}_i||_2= 0$}.\end{array} \right.
\label{subgradient}
\end{align}
\item If the subgradient vectors of the optimal solution $\hat{\X}$ satisfy $||\s_i||_2<1$ for all $i\notin T(\hat{\X})$, then any optimal solution, $\check{\X}$, to \eqref{glasso} satisfies $\check{\x}_{i}=0$ for all $i\notin T(\hat{\X})$.
\item When conditions in (2) are satisfied, if in addition $\PPhi_{T(\hat{\X})}^H\PPhi_{T(\hat{\X})}\in\C^{KJ\times KJ}$ is invertible, then $\hat{\X}$ is the unique solution to \eqref{glasso}.
\end{enumerate}
\label{opunicondition}
\end{Lemma}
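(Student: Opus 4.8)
The plan is to treat \eqref{glasso} as the minimization of the convex objective $f(\X)=\frac{1}{2}\|\y-\PPhi\,\text{vec}(\X)\|_2^2+\lambda\|\X\|_{2,1}$, which splits into a differentiable quadratic data-fit term and the nonsmooth but convex regularizer $\lambda\|\X\|_{2,1}=\lambda\sum_i\|\x_i\|_2$. For part (1), I would invoke the standard first-order optimality condition for convex functions: $\hat{\X}$ is a global minimizer if and only if $\0$ lies in the subdifferential of $f$ at $\hat{\X}$. The gradient of the smooth term with respect to $\text{vec}(\X)$ is $\PPhi^H(\PPhi\,\text{vec}(\hat{\X})-\y)$, while the subdifferential of $\|\cdot\|_{2,1}$ decouples across columns, each block being $\hat{\x}_i/\|\hat{\x}_i\|_2$ when $\hat{\x}_i\neq\0$ and the closed unit ball $\{\z:\|\z\|_2\leq1\}$ otherwise, exactly as in \eqref{subgradient}. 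Combining these yields the stationarity equation, and substituting $\y=\PPhi\,\text{vec}(\X_0)+\n$ (the lifted form of \eqref{origin}) produces the equivalent form \eqref{firstordercondition}.

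For part (2), the hard part, I would first argue that \emph{every} optimal solution produces the same fitted vector $\PPhi\,\text{vec}(\X)$. The key observation is that $\r\mapsto\frac{1}{2}\|\y-\r\|_2^2$ is strictly convex in $\r$, so writing $\hat{\r}=\PPhi\,\text{vec}(\hat{\X})$ and $\check{\r}=\PPhi\,\text{vec}(\check{\X})$ for two optimizers and moving along the segment $\X_t=(1-t)\hat{\X}+t\check{\X}$, both the data-fit term and $\|\X_t\|_{2,1}$ are convex in $t$ while their sum is constant (equal to the optimal value). Since a sum of two convex functions that is affine forces each to be affine, the data-fit term is affine in $t$; its strict convexity in $\r$ then forces $\hat{\r}=\check{\r}$, i.e. $\PPhi\,\text{vec}(\hat{\X})=\PPhi\,\text{vec}(\check{\X})$. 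Subtracting the two copies of \eqref{firstordercondition} and using $\PPhi^H\PPhi\,\text{vec}(\hat{\X})=\PPhi^H\PPhi\,\text{vec}(\check{\X})$ shows that the associated subgradients coincide, $\hat{\s}=\check{\s}$. For $i\notin T(\hat{\X})$ we are given $\|\hat{\s}_i\|_2<1$; since $\check{\s}_i=\hat{\s}_i$ is a subgradient of $\|\cdot\|_2$ at $\check{\x}_i$ and any nonzero point admits only a unit-norm subgradient, the strict inequality $\|\check{\s}_i\|_2<1$ forces $\check{\x}_i=\0$, which is the claim.

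For part (3), part (2) has already confined every optimizer to the column support $T:=T(\hat{\X})$, so I would restrict attention to matrices supported on $T$, where $\PPhi\,\text{vec}(\X)=\PPhi_T\,\text{vec}(\X_T)$ in the block-structured sense defined after \eqref{phi}. Because all optimizers share the same fitted vector, $\PPhi_T\,\text{vec}(\hat{\X}_T)=\PPhi_T\,\text{vec}(\check{\X}_T)$; invertibility of $\PPhi_T^H\PPhi_T$ is equivalent to $\PPhi_T$ having full column rank, so this equality yields $\text{vec}(\hat{\X}_T)=\text{vec}(\check{\X}_T)$. Together with $\hat{\x}_i=\check{\x}_i=\0$ off $T$, this gives $\hat{\X}=\check{\X}$, establishing uniqueness.

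I expect the main obstacle to be the residual-invariance step opening part (2): everything downstream, namely the coincidence of the subgradient vectors, the strict-dual-feasibility argument forcing the off-support columns to vanish, and the full-rank argument for uniqueness, follows routinely once all optimizers are shown to share the same $\PPhi\,\text{vec}(\X)$ and hence the same $\s$. The remaining work is purely bookkeeping about the block indexing of $\PPhi_T$.
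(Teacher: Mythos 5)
Your proposal is correct and follows essentially the same route as the paper: the standard convex first-order optimality condition for part (1), strict convexity of the quadratic loss in the fitted vector $\PPhi\,\text{vec}(\X)$ to show that all optimizers share the same fitted vector and hence the same subgradient $\s$ for part (2), and invertibility (full column rank) of $\PPhi_T$ to force uniqueness on the support in part (3). The only cosmetic differences are that the paper establishes residual invariance by a midpoint-contradiction argument rather than your segment-affineness argument, and it phrases part (3) through the restricted first-order condition rather than directly through $\PPhi_T\,\text{vec}(\hat{\X}_T)=\PPhi_T\,\text{vec}(\check{\X}_T)$; these are interchangeable with your steps.
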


\begin{proof}
\leavevmode
\begin{enumerate}
\item Since problem \eqref{glasso} is convex, any optimal solution, $\hat{\X}$, must satisfy the first-order condition \eqref{firstordercondition}.
\item We first argue that when $\lambda$ is fixed, for two arbitrary different optimal solutions $\hat{\X}_1$ and $\hat{\X}_2$ to \eqref{glasso}, we have $\PPhi\cdot\text{vec}(\hat{\X}_1)=\PPhi\cdot\text{vec}(\hat{\X}_2)$. This can be proved by contradiction as follows.

    Assume $\PPhi\cdot\text{vec}(\hat{\X}_1)\neq\PPhi\cdot\text{vec}(\hat{\X}_2)$ for two arbitrary optimal solutions $\hat{\X}_1\neq\hat{\X}_2$ to \eqref{glasso}. By constructing $\hat{\X}_3=\frac{1}{2}(\hat{\X}_1+\hat{\X}_2)$, a little linear algebra yields
    \begin{equation*}
    \begin{aligned}
    \frac{1}{2}||\y-\L(\hat{\X}_3)||_2^2+\lambda||\hat{\X}_3||_{2,1}<\frac{1}{2}||\y-&\L(\hat{\X}_k)||_2^2\\
    &+\lambda||\hat{\X}_k||_{2,1}
    \end{aligned}
    \end{equation*}
    for $k\in\{1,2\}$, due to the strict convexity of the function $f(\x)=\frac{1}{2}||\y-\x||_2^2$ and the optimality of $\hat{\X}_1$ and $\hat{\X}_2$. Thus, $\hat{\X}_1$ and $\hat{\X}_2$ are not optimal. By contradiction, $\PPhi\cdot\text{vec}(\hat{\X}_1)=\PPhi\cdot\text{vec}(\hat{\X}_2)$.
    Then from \eqref{firstordercondition}, we can derive that $\s$ for different optimal solutions are the same. Therefore, assume we have an optimal solution $\hat{\X}$ such that $||\s_i||_2<1$ for all $i\notin T(\hat{\X})$, any other optimal solution, $\check{\X}$, would have subgradient vectors $||\check{\s}_i||_2=||\s_i||_2<1$ for all $i\notin T(\hat{\X})$ which implies $\check{\x}_{i}=0$ according to \eqref{subgradient}.
\item If conditions in (2) are satisfied and $\PPhi_{T(\hat{\X})}^H\PPhi_{T(\hat{\X})}\in\C^{KJ\times KJ}$ is invertible, the solution of the support restricted problem $\frac{1}{2}||\y-\PPhi_{T(\hat{\X})}\cdot\text{vec}(\X)||_2^2+\lambda||\X||_{2,1}$ is unique by solving the restricted first order condition.
\end{enumerate}
\end{proof}

\subsection{Primal-Dual Witness Construction}
The method we apply to find the conditions regarding the regularization parameter $\lambda$ and number of observations $N$ for satisfying optimality and uniqueness conditions is the primal-dual witness method \cite{wainwright2009sharp} which constructs the solution matrix, $\hat{\X}$, and subgradient vector, $\s$, through the following steps.

\begin{enumerate}
\item Conditioned on $\PPhi_{T}^H\PPhi_{T}\in\C^{KJ\times KJ}$ is invertible, we first obtain $\thX_T\in\C^{K\times J}$ by solving the support restricted problem
    \begin{equation}
    \begin{aligned}
    \thX_T=\arg\min_{\X\in\C^{K\times J}}\bigg\{\frac{1}{2}||\y-\PPhi_{T}\cdot\text{vec}&(\X)||_2^2\\
    &+\lambda||\X||_{2,1}\bigg\}.
    \label{restrictedLS}
    \end{aligned}
    \end{equation}
    The solution $\thX_T$ is unique under the invertibility condition on $\PPhi_{T}^H\PPhi_{T}$. And we set $\thX_{T^C}\in\C^{K\times(M-J)}=\bf{0}$. Thus, $\hat{\X}$ has support contained within the support $T$ of the ground truth solution $\X_0$.
\item We calculate the subgradient vector $\s_T\in\C^{JK}$ based on $\thX_T$, where $\s_T$ is a sub-vector of $\s$ consisting of $\s_j$ for all $j\in T$.
\item We solve for a vector $\s_{T^C}\in \C^{(M-J)K}$ satisfying \eqref{firstordercondition} and check whether $||\s_i||_2<1$ for all $i\notin T$.
\end{enumerate}

If $\tPPhi_T^H\tPPhi_T$ is invertible and $||\s_i||_2<1$ for all $i\in T^C$, $\hat{\X}$ constructed via the primal-dual witness method is the unique optimal solution to \eqref{glasso} with its support contained within the support of the ground truth solution $\X_0$. And note that the primal-dual witness construction succeeds only if the problem \eqref{glasso} has a unique solution whose support is contained within the support of the ground truth. The challenges of the construction lie in characterizing the regularization parameter $\lambda$ and the number of observations $N$ such that $||\s_i||_2<1$ for all $i\in T^C$.

To simplify the notation, without loss of generality, we assume the support of $\X_0$ is the first $J$ columns and $T=\{1, 2, ..., J\}$ throughout the proof. Therefore, rewriting \eqref{firstordercondition} into matrix multiplication form results in
\begin{equation}
\begin{aligned}
\begin{bmatrix}
\tPPhi_T^H\tPPhi_T \quad \tPPhi_T^H\tPPhi_{T^C}\\
\tPPhi_{T^C}^H\tPPhi_T \quad \tPPhi_{T^C}^H\tPPhi_{T^C}
\end{bmatrix}&
\begin{bmatrix}
\text{vec}(\thX_T)-\text{vec}(\X_{0,T})\\
\bf{0}
\end{bmatrix}
\\&-
\begin{bmatrix}
\tPPhi_T^H\\
\tPPhi_{T^C}^H
\end{bmatrix}
\n+\lambda
\begin{bmatrix}
\ts_T\\
\ts_{T^C}
\end{bmatrix}=0.
\end{aligned}
\label{bigmatrix}
\end{equation}
When $\tPPhi_T^H\tPPhi_T$ is invertible, from \eqref{bigmatrix} we can derive that
\begin{align*}
\Delta(\X)=\text{vec}(\thX_T)-\text{vec}(\X_{0,T})=(\tPPhi_T^H\tPPhi_T)^{-1}\left(\tPPhi_T^H\n-\lambda\ts_T\right)
\end{align*}
and
\begin{align}
\ts_{T^C}=\frac{1}{\lambda}\left( \tPPhi_{T^C}^H\n- \tPPhi_{T^C}^H\tPPhi_T\Delta(\X)\right).
\label{stc}
\end{align}
Substituting the full expression of $\Delta(\X)$ into \eqref{stc} results in

\begin{equation}
\begin{aligned}
\ts_{T^C} = \tPPhi^H_{T^C}&\left(\I_N-\tPPhi_T(\tPPhi_T^H\tPPhi_T)^{-1}\tPPhi_T^H\right)\frac{\n}{\lambda}\\
&\qquad\qquad+\tPPhi_{T^C}^H\tPPhi_T(\tPPhi_T^H\tPPhi_T)^{-1}\ts_T.
\end{aligned}
\label{stcnew}
\end{equation}

\subsection{Important Lemmas}
In this section, we introduce some important lemmas and propositions that will be applied during the proof of Theorem \ref{maintheorem}.
First is the isometry bound for the linear operator $\L$ defined in \eqref{L} (and $\PPhi$ defined in \eqref{phi}) which can be found in Lemma 4.3 in \cite{ling2015self}.

\begin{Lemma}\cite[Lemma 4.3]{ling2015self}
(Isometry) For the linear operator $\L$ defined in (\ref{L}) with $\B^H\B=\I_K$ and $\delta>0$,
\begin{align*}
||\PPhi^H_T\PPhi_T-\I_{T}||=||\L_{T}^*\L_{T}-\I_T||\leq\delta
\end{align*}
with probability at least $1-N^{-\alpha+1}$ where $\I_T$ is the identity operator on the support $T$ such that $\I_T(\X)=\X_T$, if $\A$ is a random Gaussian matrix and $N\geq C_{\alpha}\mu_{max}^2KJ\max\{\log(N)/\delta^2,\log^2(N)/\delta\}$. Here $C_{\alpha}$ is a constant that grows linearly with $\alpha>1$.
\label{isometryproperty}
\end{Lemma}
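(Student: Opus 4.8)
As the statement is quoted verbatim as Lemma~4.3 of \cite{ling2015self}, one may import it directly; were I to reprove it, the plan is to treat it as a restricted-isometry estimate for the structured random operator and establish it by matrix concentration. The starting point is to write $\PPhi_T^H\PPhi_T$ as a sum of independent rank-structured matrices indexed by the rows of $\A$. Writing $\a'_{n,T}\in\R^{J}$ for the restriction to $T$ of the $n$-th row of $\A$ and $\b'_n\in\C^{K}$ for the $n$-th column of $\B^H$, the $n$-th row of $\PPhi_T$ is, up to conjugation and the block ordering of \eqref{phi}, the Kronecker product $\v_n=\a'_{n,T}\otimes\b'_n$, so that
\begin{align*}
\PPhi_T^H\PPhi_T=\sum_{n=1}^N \v_n\v_n^H=\sum_{n=1}^N\left(\a'_{n,T}(\a'_{n,T})^{T}\right)\otimes\left(\b'_n(\b'_n)^{H}\right).
\end{align*}
The summands are independent across $n$. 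Using $\EEE[\a'_{n,T}(\a'_{n,T})^{T}]=\I_J$ for the standard Gaussian rows and $\sum_n\b'_n(\b'_n)^{H}=\B^H\B=\I_K$ from orthonormality, the mean is exactly $\EEE[\PPhi_T^H\PPhi_T]=\I_J\otimes\I_K=\I_{KJ}$, which is why the identity appears on the support. The problem thus reduces to bounding $\|\sum_n(\v_n\v_n^H-\EEE\,\v_n\v_n^H)\|$ for a sum of independent, centered random matrices of dimension $KJ$.

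Two quantities then drive the concentration. For the matrix variance I would use the Gaussian fourth-moment identity $\EEE[\|\a'_{n,T}\|_2^2\,\a'_{n,T}(\a'_{n,T})^{T}]=(J+2)\I_J$ to get
\begin{align*}
\left\|\sum_{n=1}^N\EEE\left[(\v_n\v_n^H)^2\right]\right\|=(J+2)\left\|\sum_{n=1}^N\|\b'_n\|_2^2\,\b'_n(\b'_n)^{H}\right\|\leq (J+2)\max_n\|\b'_n\|_2^2,
\end{align*}
where the last step uses $\sum_n\b'_n(\b'_n)^{H}=\I_K$ again. The coherence enters through $\|\b'_n\|_2^2=\sum_{i}|\B_{ni}|^2\leq K\mu_{max}^2/N$, so the variance parameter is $\sigma^2\lesssim JK\mu_{max}^2/N$. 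The per-term operator norm is $\|\v_n\v_n^H\|=\|\a'_{n,T}\|_2^2\,\|\b'_n\|_2^2$, which is \emph{not} uniformly bounded because $\|\a'_{n,T}\|_2^2\sim\chi^2_J$. I would handle this by truncation: on the event that $\max_n\|\a'_{n,T}\|_2^2\lesssim J\log N$, which holds with probability at least $1-O(N^{-\alpha})$ by a $\chi^2$ tail bound and a union bound over the $N$ rows, every summand obeys $\|\v_n\v_n^H\|\leq R$ with $R\lesssim JK\mu_{max}^2\log N/N$.

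With these parameters in hand I would apply the matrix Bernstein inequality (its truncated/subexponential form, as used in \cite{ling2015self}) to the dimension-$KJ$ sum, obtaining a bound of the shape
\begin{align*}
\PPP\left(\|\PPhi_T^H\PPhi_T-\I_{KJ}\|\geq\delta\right)\leq 2KJ\exp\left(-\frac{c\,\delta^2}{\sigma^2+R\delta}\right)+O(N^{-\alpha}),
\end{align*}
the additive term absorbing the truncation event. Forcing the exponent to exceed $\alpha\log N$ splits into two regimes: the variance term $\sigma^2\lesssim JK\mu_{max}^2/N$ demands $N\gtrsim\mu_{max}^2KJ\log N/\delta^2$, while the tail term $R\delta$ demands $N\gtrsim\mu_{max}^2KJ\log^2 N/\delta$. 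Together these give precisely the stated $N\geq C_\alpha\mu_{max}^2KJ\max\{\log N/\delta^2,\log^2 N/\delta\}$ with failure probability $O(N^{-\alpha+1})$.

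The main obstacle is exactly the unbounded per-term norm produced by the Gaussian ($\chi^2$) factor: a bounded-support matrix Bernstein does not apply off the shelf, so the crux is the truncation-plus-tail step, and the extra logarithm in the second regime (the $\log^2 N$ rather than $\log N$) is the price of union-bounding the $\chi^2$ row norms over all $N$ rows before invoking the concentration inequality. The secondary bookkeeping point is to route the coherence cleanly through both the variance and the truncated norm so that it emerges as the single factor $\mu_{max}^2$ in the final sample complexity.
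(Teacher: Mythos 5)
The paper offers no proof of this lemma at all: it imports it verbatim, by citation, as Lemma 4.3 of \cite{ling2015self}, which is exactly your primary move, so your proposal coincides with the paper's treatment. Your optional reproof sketch (the Kronecker decomposition $\v_n=\a'_{n,T}\otimes\b'_n$, the coherence-controlled variance, and truncation of the $\chi^2$ row norms before invoking matrix Bernstein, yielding the two regimes $\log N/\delta^2$ and $\log^2 N/\delta$) is moreover a faithful reconstruction of how the cited reference actually establishes the result, differing only in that Ling and Strohmer phrase the subexponential handling via an Orlicz-norm ($\psi_1$) version of matrix Bernstein rather than explicit truncation.
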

According to Lemma A.12 in \cite{foucart2013mathematical}, if $||\PPhi^H_T\PPhi_T-\I_{T}||\leq\delta<1$, $\PPhi^H_T\PPhi_T$ is invertible and
$||(\PPhi^H_T\PPhi_T)^{-1}||\leq (1-\delta)^{-1}$. In addition, we have the following quadratic Gaussian tail bound proposition, developed from Theorem 1 in \cite{hsu2012tail}.
\begin{Proposition}
Let $\H\in\C^{K\times N}$ and $\mathbf{\Sigma}=\H^H\H$. Let $\a\in\C^{N}$ whose real and imaginary entries follow the i.i.d normal distribution with 0 mean and $\sigma^2$ variance. For all $\alpha>0$,
\begin{align*}
\text{Pr}\left( ||\H\a||_2^2>\sigma^2\left[2\tr(\mathbf{\Sigma})+2\sqrt{2\tr(\mathbf{\Sigma}^2)\alpha}+2||\mathbf{\Sigma}||\alpha \right]\right)\\
\leq e^{-\alpha}.
\end{align*}
If $\a\in\R^N$ only contains the real part, for all $\alpha>0$,
\begin{equation*}
\begin{aligned}
\text{Pr}\left( ||\H\a||_2^2>\sigma^2\left[\tr(\mathbf{\Sigma})+2\sqrt{\tr(\mathbf{\Sigma}^2)\alpha}+2||\mathbf{\Sigma}||\alpha \right]\right)\\
\leq e^{-\alpha}.
\end{aligned}
\end{equation*}

\label{quadraticgaussian}
\end{Proposition}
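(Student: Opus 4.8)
The plan is to reduce both inequalities to the subgaussian quadratic-form tail bound of Hsu, Kakade, and Zhang (Theorem 1 in \cite{hsu2012tail}), which states that for a \emph{real} matrix $\G$ and a centered real random vector $\x$ obeying $\EEE[\exp(\eta^T\x)]\leq\exp(\sigma^2\|\eta\|_2^2/2)$ for all real $\eta$, the bound $\PPP\bigl(\|\G\x\|_2^2>\sigma^2[\tr(\G^T\G)+2\sqrt{\tr((\G^T\G)^2)\,t}+2\|\G^T\G\|\,t]\bigr)\leq e^{-t}$ holds for every $t>0$. A Gaussian vector with i.i.d.\ $\NN(0,\sigma^2)$ entries meets the hypothesis with parameter $\sigma$ and zero mean, so the only real work is to recast the \emph{complex} quantity $\|\H\a\|_2^2$ as a genuine real quadratic form and then relate the trace and spectral-norm quantities of the realified Gram matrix back to $\mathbf{\Sigma}=\H^H\H$. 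I will set $t=\alpha$.

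For the complex case I would split $\H=\H_R+i\H_I$ and $\a=\a_R+i\a_I$, where $\a_R,\a_I\in\R^N$ are independent with i.i.d.\ $\NN(0,\sigma^2)$ entries, and stack the real and imaginary parts of $\H\a$ to obtain $\|\H\a\|_2^2=\|\tilde{\H}[\a_R;\a_I]\|_2^2$ with the real $2K\times 2N$ matrix $\tilde{\H}=\bigl[\begin{smallmatrix}\H_R&-\H_I\\\H_I&\H_R\end{smallmatrix}\bigr]$. Since $[\a_R;\a_I]\in\R^{2N}$ is a centered real Gaussian vector with parameter $\sigma$, the theorem applies with $\tilde{\mathbf{\Sigma}}:=\tilde{\H}^T\tilde{\H}$. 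The key observation is that the real-representation map $\Psi(M)=\bigl[\begin{smallmatrix}\Re M&-\operatorname{Im}M\\\operatorname{Im}M&\Re M\end{smallmatrix}\bigr]$ satisfies $\Psi(M)^T=\Psi(M^H)$ and is multiplicative, so $\tilde{\mathbf{\Sigma}}=\Psi(\H^H)\Psi(\H)=\Psi(\mathbf{\Sigma})$, whose eigenvalues are exactly those of the Hermitian matrix $\mathbf{\Sigma}$ with each multiplicity doubled. Hence $\tr(\tilde{\mathbf{\Sigma}})=2\tr(\mathbf{\Sigma})$, $\tr(\tilde{\mathbf{\Sigma}}^2)=2\tr(\mathbf{\Sigma}^2)$, and $\|\tilde{\mathbf{\Sigma}}\|=\|\mathbf{\Sigma}\|$; substituting these yields precisely the claimed complex bound.

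For the real-input case I would instead write $\|\H\a\|_2^2=\|\H_R\a\|_2^2+\|\H_I\a\|_2^2=\|\check{\H}\a\|_2^2$ with $\check{\H}=[\H_R;\H_I]\in\R^{2K\times N}$, and apply the theorem to the real Gaussian vector $\a\in\R^N$. Here $\check{\mathbf{\Sigma}}:=\check{\H}^T\check{\H}=\H_R^T\H_R+\H_I^T\H_I=\Re(\mathbf{\Sigma})$, and I would bound its quantities by those of $\mathbf{\Sigma}$: first $\tr(\check{\mathbf{\Sigma}})=\tr(\Re(\mathbf{\Sigma}))=\tr(\mathbf{\Sigma})$ since $\tr(\mathbf{\Sigma})$ is real; second $\|\check{\mathbf{\Sigma}}\|=\|\tfrac12(\mathbf{\Sigma}+\mathbf{\Sigma}^T)\|\leq\|\mathbf{\Sigma}\|$ by the triangle inequality together with $\|\mathbf{\Sigma}^T\|=\|\mathbf{\Sigma}\|$; and third $\tr(\check{\mathbf{\Sigma}}^2)=\|\Re(\mathbf{\Sigma})\|_F^2\leq\|\Re(\mathbf{\Sigma})\|_F^2+\|\operatorname{Im}(\mathbf{\Sigma})\|_F^2=\|\mathbf{\Sigma}\|_F^2=\tr(\mathbf{\Sigma}^2)$. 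Because the tail bound is monotone in each of these three quantities, replacing them by the upper bounds gives the stated real inequality.

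The main obstacle is bookkeeping rather than conceptual: carrying out the complex-to-real reduction and verifying the eigenvalue/trace correspondences between $\mathbf{\Sigma}$ and its real realization. The one place that requires genuine inequalities rather than identities is the real-input case, where $\|\H\a\|_2^2$ sees only $\Re(\mathbf{\Sigma})$; there I must justify $\tr(\Re(\mathbf{\Sigma})^2)\leq\tr(\mathbf{\Sigma}^2)$ and $\|\Re(\mathbf{\Sigma})\|\leq\|\mathbf{\Sigma}\|$, both of which follow from the Hermitian structure of $\mathbf{\Sigma}$ and the splitting of the Frobenius norm into its real and imaginary parts.
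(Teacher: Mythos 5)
Your proposal is correct and follows essentially the same route as the paper: both arguments realify $\H$ and $\a$, apply Theorem 1 of Hsu--Kakade--Zhang to the resulting real quadratic form, and transfer the trace and spectral-norm quantities of the realified Gram matrix back to $\mathbf{\Sigma}$ (exact identities in the complex case, the inequalities $\tr(\Re(\mathbf{\Sigma})^2)\leq\tr(\mathbf{\Sigma}^2)$ and $\|\Re(\mathbf{\Sigma})\|\leq\|\mathbf{\Sigma}\|$ in the real case). The only difference is in verification style: you invoke multiplicativity of the realification map $\Psi$ and eigenvalue doubling, whereas the paper computes the block Gram matrix and its Frobenius and spectral norms directly.
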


\begin{proof}
When $\H$ and $\a$ are a complex matrix and vector, we can write $\H=\H_R+i\H_I$ and $\a=\a_R+i\a_I$ where $\H_R$, $\H_I$, $\a_R$ and $\a_I$ are all real and the entries of $\a_R$ and $\a_I$ are i.i.d Gaussian random variables with 0 mean and $\sigma^2$ variance. We then have
\begin{equation*}
\begin{aligned}
||\H\a||_2^2&=||(\H_R+i\H_I)(\a_R+i\a_I)||_2^2\\
&=||(\H_R\a_R-\H_I\a_I)+i(\H_R\a_I+\H_I\a_R)||_2^2\\
&=||\H_R\a_R-\H_I\a_I||_2^2+||\H_R\a_I+\H_I\a_R||_2^2\\
&=\left|\left|
\begin{bmatrix}
\H_R&-\H_I\\
\H_I&\H_R
\end{bmatrix}
\begin{bmatrix}
\a_R\\
\a_I
\end{bmatrix}
\right|\right|_2^2.
\end{aligned}
\end{equation*}
Define $\H_o=\begin{bmatrix}
\H_R&-\H_I\\
\H_I&\H_R
\end{bmatrix}$ and $\mathbf{\Sigma}_o=\H_o^T\H_o$. $\mathbf{\Sigma}_o$ has the form
\begin{equation}
\begin{aligned}
\mathbf{\Sigma}_o&=\begin{bmatrix}
\H_R^T&\H_I^T\\
-\H_I^T&\H_R^T
\end{bmatrix}\begin{bmatrix}
\H_R&-\H_I\\
\H_I&\H_R
\end{bmatrix}\\
&=
\begin{bmatrix}
\H_R^T\H_R+\H_I^T\H_I&-\H_R^T\H_I+\H_I^T\H_R\\
-\H_I^T\H_R+\H_R^T\H_I&\H_I^T\H_I+\H_R^T\H_R
\end{bmatrix}\\
&=
\begin{bmatrix}
\H_1&-\H_2\\
\H_2&\H_1
\end{bmatrix}
\end{aligned}
\label{sigmao}
\end{equation}
where we define $\H_1=\H_R^T\H_R+\H_I^T\H_I$ and $\H_2=-\H_I^T\H_R+\H_R^T\H_I$.
Applying Theorem 1 in \cite{hsu2012tail}, we get
\begin{equation*}
\begin{aligned}
\text{Pr}\left(||\H\a||_2^2>\sigma^2\left[\tr(\mathbf{\Sigma}_o)+2\sqrt{\tr(\mathbf{\Sigma}_o^2)\alpha}+2||\mathbf{\Sigma}_o||\alpha\right] \right)\\
\leq e^{-\alpha}.
\end{aligned}
\end{equation*}
If we further define
\begin{equation}
\begin{aligned}
\mathbf{\Sigma}&=\H^H\H=(\H_R+i\H_I)^H(\H_R+i\H_I)\\
&=(\H_R^T-i\H_I^T)(\H_R+i\H_I)\\
&=(\H_R^T\H_R+\H_I^T\H_I)+i(-\H_I^T\H_R+\H_R^T\H_I)\\
&=\H_1+i\H_2,
\end{aligned}
\label{sigma}
\end{equation}
by comparing \eqref{sigmao} and \eqref{sigma}, one can check that $\tr(\mathbf{\Sigma}_o)=2\tr(\H_1)=2\tr(\mathbf{\Sigma})$ since $\tr(\H_2)=0$, $\tr(\mathbf{\Sigma}_o^2)=||\mathbf{\Sigma}_o||_F^2=2(||\H_1||_F^2+||\H_2||_F^2)=2||\mathbf{\Sigma}||_F^2=2\tr(\mathbf{\Sigma}^2)$, and
\begin{equation*}
\begin{aligned}
&\quad||\mathbf{\Sigma}_o||=\left|\left|\begin{bmatrix}
\H_1&-\H_2\\
\H_2&\H_1
\end{bmatrix}\right|\right|\\
&=\max_{\left|\left|\begin{bmatrix}\x_1\\ \x_2\end{bmatrix}\right|\right|_2=1}\left|\left|\begin{bmatrix}
\H_1&-\H_2\\
\H_2&\H_1
\end{bmatrix}\begin{bmatrix}\x_1\\ \x_2\end{bmatrix}\right|\right|_2\\
&=\max_{\left|\left|\begin{bmatrix}\x_1\\ \x_2\end{bmatrix}\right|\right|_2=1}\sqrt{ ||\H_1\x_1-\H_2\x_2||_2^2+||\H_2\x_1+\H_1x_2||_2^2 }\\
&=\max_{||\x_1||_2^2+||\x_2||^2_2=1}\sqrt{ ||(\H_1\x_1-\H_2\x_2)+i(\H_2\x_1+\H_1\x_2)||_2^2 }\\
&=\max_{||\x_1||_2^2+||\x_2||^2_2=1}\sqrt{ ||(\H_1+i\H_2)(\x_1+i\x_2)||_2^2 }\\
&=\max_{||\x_1||_2^2+||\x_2||^2_2=1} ||\mathbf{\Sigma}(\x_1+i\x_2)||_2=||\mathbf{\Sigma}||
\end{aligned}
\end{equation*}
where $\x_1$ and $\x_2\in\R^{N}$ since $\mathbf{\Sigma}_o$ is a real matrix, so that the vector corresponding to its largest singular value is also real. Therefore, we have
\begin{equation*}
\begin{aligned}
\text{Pr}\left(||\H\a||_2^2>\sigma^2\left[2\tr(\mathbf{\Sigma})+2\sqrt{2\tr(\mathbf{\Sigma}^2)\alpha}+2||\mathbf{\Sigma}||\alpha\right] \right)\\
\leq e^{-\alpha}.
\end{aligned}
\end{equation*}
Similarly, when $\a$ only contains the real part
\begin{align*}
||\H\a||_2^2=\left|\left|\begin{bmatrix}
\H_R\\
\H_I
\end{bmatrix}\a_{R}
\right|\right|_2^2,
\end{align*}
$\mathbf{\Sigma}$ still follows \eqref{sigma} and
\begin{align*}
\mathbf{\Sigma}_o= [\H_R^T\H_R+\H_I^T\H_I]=\H_1.
\end{align*}
In this case, $\tr(\mathbf{\Sigma}_o)=\tr(\mathbf{\Sigma})$, $\tr(\mathbf{\Sigma}_o^2)\leq\tr(\mathbf{\Sigma}^2)$ and since $\mathbf{\Sigma}_o$ is real, we have
\begin{equation*}
\begin{aligned}
&\quad||\mathbf{\Sigma}_o||=\max_{||\x||_2=1,\x\in\R^N}\sqrt{||\H_1\x||_2^2}\\
&\leq \max_{||\x||_2=1,\x\in\R^N}\sqrt{||\H_1\x||_2^2+||\H_2\x||_2^2}\\
&=\max_{||\x||_2=1,\x\in\R^N}\sqrt{||\H_1\x+i\H_2\x||_2^2}\\
&\leq\max_{||\x||_2=1,\x\in\C^N}\sqrt{||(\H_1+i\H_2)\x||_2^2}\\
&=\max_{||\x||_2=1,\x\in\C^N}||\mathbf{\Sigma}\x||_2=||\mathbf{\Sigma}||.
\end{aligned}
\end{equation*}
So we have, for $\alpha>0$,
\begin{equation*}
\begin{aligned}
\tr(\mathbf{\Sigma})+2\sqrt{\tr(\mathbf{\Sigma}^2)\alpha}&+2||\mathbf{\Sigma}||\alpha\geq\\
&\tr(\mathbf{\Sigma}_o)+2\sqrt{\tr(\mathbf{\Sigma}_o^2)\alpha}+2||\mathbf{\Sigma}_o||\alpha
\end{aligned}
\end{equation*}
which results in
\begin{equation*}
\begin{aligned}
\text{Pr}\left( ||\H\a||_2^2>\sigma^2\left[\tr(\mathbf{\Sigma})+2\sqrt{\tr(\mathbf{\Sigma}^2)\alpha}+2||\mathbf{\Sigma}||\alpha \right]\right)\\
\leq e^{-\alpha}.
\end{aligned}
\end{equation*}

\end{proof}

\begin{Proposition}
Let $\H\in\C^{K\times N}$ and $\mathbf{\Sigma}=\H^H\H$. Let $\a\in\C^{N}$ whose real and imaginary entries follow the i.i.d normal distribution with 0 mean and $\sigma^2$ variance. For all $\alpha>1$,
\begin{align*}
\text{Pr}\left( ||\H\a||_2^2>\sigma^2\left[2+(2\sqrt{2}+2)\alpha\right]\tr(\mathbf{\Sigma})\right)\leq e^{-\alpha}.
\end{align*}
If $\a\in\R^N$ only contains the real part, for all $\alpha>1$,
\begin{align*}
\text{Pr}\left( ||\H\a||_2^2>\sigma^2\left(1+4\alpha\right)\tr(\mathbf{\Sigma})\right)\leq e^{-\alpha}.
\end{align*}

\label{genquagaussian}
\end{Proposition}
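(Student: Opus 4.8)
The plan is to obtain this proposition as an immediate relaxation of Proposition \ref{quadraticgaussian}, by bounding the two spectral quantities $\tr(\mathbf{\Sigma}^2)$ and $||\mathbf{\Sigma}||$ appearing in the sharper tail bound purely in terms of $\tr(\mathbf{\Sigma})$. The crucial structural fact I would exploit is that $\mathbf{\Sigma}=\H^H\H$ is Hermitian positive semidefinite, so its eigenvalues $\lambda_1,\ldots,\lambda_K$ are all nonnegative. This yields three elementary inequalities: first, $\tr(\mathbf{\Sigma}^2)=\sum_i\lambda_i^2\leq\bigl(\sum_i\lambda_i\bigr)^2=\bigl(\tr(\mathbf{\Sigma})\bigr)^2$, since every cross term in the expansion of the right-hand side is nonnegative (this is precisely where positive semidefiniteness is used); second, $||\mathbf{\Sigma}||=\max_i\lambda_i\leq\sum_i\lambda_i=\tr(\mathbf{\Sigma})$; and third, for $\alpha>1$ we have $\sqrt{\alpha}\leq\alpha$.

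Next I would substitute these into the threshold of the complex-case bound from Proposition \ref{quadraticgaussian}. Using $\tr(\mathbf{\Sigma}^2)\leq(\tr(\mathbf{\Sigma}))^2$ together with $\sqrt{\alpha}\leq\alpha$ gives $2\sqrt{2\tr(\mathbf{\Sigma}^2)\alpha}\leq 2\sqrt{2}\,\alpha\,\tr(\mathbf{\Sigma})$, while $||\mathbf{\Sigma}||\leq\tr(\mathbf{\Sigma})$ gives $2||\mathbf{\Sigma}||\alpha\leq 2\alpha\,\tr(\mathbf{\Sigma})$. Combined with the leading $2\tr(\mathbf{\Sigma})$ term, the three-term threshold collapses to $[2+(2\sqrt{2}+2)\alpha]\tr(\mathbf{\Sigma})$, which is exactly the claimed complex bound. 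The real case is handled identically: the sharper threshold $\tr(\mathbf{\Sigma})+2\sqrt{\tr(\mathbf{\Sigma}^2)\alpha}+2||\mathbf{\Sigma}||\alpha$ becomes $[1+4\alpha]\tr(\mathbf{\Sigma})$ after the same substitutions, now via $2\sqrt{\tr(\mathbf{\Sigma}^2)\alpha}\leq 2\alpha\,\tr(\mathbf{\Sigma})$.

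The only point requiring care is the monotonicity step that converts these threshold inequalities into the probability statement. Since $\Pr(||\H\a||_2^2>t)$ is nonincreasing in $t$, and since the coarse thresholds here are shown to be at least as large as the fine thresholds in Proposition \ref{quadraticgaussian}, the event governed by the larger threshold is contained in the event governed by the smaller one, so its probability is likewise bounded by $e^{-\alpha}$. I do not expect a genuine obstacle beyond selecting the correct three inequalities and verifying the arithmetic; the result is a deliberate weakening designed to express the tail bound through $\tr(\mathbf{\Sigma})$ alone, which is the only spectral quantity convenient to control in the subsequent application of this proposition.
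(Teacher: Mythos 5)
Your proposal is correct and follows essentially the same route as the paper's own proof: both use the positive semidefiniteness of $\mathbf{\Sigma}$ to get $\tr(\mathbf{\Sigma}^2)\leq(\tr(\mathbf{\Sigma}))^2$ and $||\mathbf{\Sigma}||\leq\tr(\mathbf{\Sigma})$, invoke $\sqrt{\alpha}\leq\alpha$ for $\alpha>1$, and then relax the thresholds in Proposition \ref{quadraticgaussian}. Your write-up simply makes explicit the monotonicity step that the paper leaves implicit.
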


\begin{proof}
Since $\mathbf{\Sigma}$ is a positive semi-definite and hermitian matrix, all its eigenvalues, $\lambda_i$, are non-negative. Thus, $\tr(\mathbf{\Sigma}^2)=\sum_{i=1}^N\lambda_i^2\leq(\sum_{i=1}^N\lambda_i)^2=\tr(\mathbf{\Sigma})^2$ and $||\mathbf{\Sigma}||=\lambda_{max}\leq\sum_{i=1}^N\lambda_i=\tr(\mathbf{\Sigma})$.
As a result, for $\alpha>1$,
\begin{equation*}
\begin{aligned}
&\sigma^2\left[2+(2\sqrt{2}+2)\alpha\right]\tr(\mathbf{\Sigma})\\
&\qquad\geq\sigma^2\left[2\tr(\mathbf{\Sigma})+2\sqrt{2\tr(\mathbf{\Sigma}^2)\alpha}+2||\mathbf{\Sigma}||\alpha \right]
\end{aligned}
\end{equation*}
and
\begin{equation*}
\begin{aligned}
&\sigma^2\left(1+4\alpha\right)\tr(\mathbf{\Sigma})\\
&\qquad\geq\sigma^2\left[\tr(\mathbf{\Sigma})+2\sqrt{\tr(\mathbf{\Sigma}^2)\alpha}+2||\mathbf{\Sigma}||\alpha \right].
\end{aligned}
\end{equation*}
Then applying Proposition \ref{quadraticgaussian} yields Proposition \ref{genquagaussian}.
\end{proof}

\subsection{Bounding $||\ts_{T^C}||_{2,\infty}$}
Recalling \eqref{stcnew}, to prove that $||\s_i||_2<1$ for all $i\in T^C$ which is equivalent to $||\ts_{T^C}||_{2,\infty}<1$, where the $\ell_{2,\infty}$-norm of the subgradient vector is defined in Section \ref{notation}, we only need to show that for a constant $\gamma\in(0,1)$,

\begin{align*}
||\tPPhi^H_{T^C}\left(\I_N-\tPPhi_T(\tPPhi_T^H\tPPhi_T)^{-1}\tPPhi_T^H\right)\frac{\n}{\lambda}||_{2,\infty}\leq\frac{\gamma}{2}
\end{align*}
and
\begin{align*}
||\tPPhi_{T^C}^H\tPPhi_T(\tPPhi_T^H\tPPhi_T)^{-1}\ts_T||_{2,\infty}\leq\frac{\gamma}{2}.
\end{align*}
Then by the triangle inequality, $||\ts_{T^C}||_{2,\infty}\leq\gamma<1$.

\begin{Lemma}
Conditioned on $\tPPhi_T^H\tPPhi_T$ being invertible, we have \begin{align*}
||\tPPhi^H_{T^C}\left(\I_N-\tPPhi_T(\tPPhi_T^H\tPPhi_T)^{-1}\tPPhi_T^H\right)\frac{\n}{\lambda}||_{2,\infty}\leq\frac{\gamma}{2}
\end{align*}
for $\gamma\in(0,1)$ with probability at least $1-N^{-\alpha+1}$ when
\begin{align*}
\lambda\geq\sqrt{\frac{C_{\alpha}\sigma^2\mu_{max}^2K[\log(M-J)+\log(N)]}{\gamma^2}}
\end{align*}
and
\begin{align*}
N\geq 10\log(M-J)+10\alpha\log(N)
\end{align*}
where $C_{\alpha}$ is a constant that grows linearly with $\alpha>1$.
\label{stc1}
\end{Lemma}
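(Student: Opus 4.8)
The plan is to bound the $\ell_{2,\infty}$-norm one column block at a time and then take a union bound over $T^{C}$. Write $\P_{T}^{\perp}:=\I_N-\tPPhi_T(\tPPhi_T^H\tPPhi_T)^{-1}\tPPhi_T^H$ for the orthogonal projection onto the complement of the range of $\tPPhi_T$, which is well defined once $\tPPhi_T^H\tPPhi_T$ is invertible. The vector $\tPPhi^H_{T^C}\P_{T}^{\perp}\n/\lambda\in\C^{(M-J)K}$ has, for each $j\in T^{C}$, a $K$-dimensional block equal to $\tfrac1\lambda\tPPhi_j^H\P_{T}^{\perp}\n$, where $\tPPhi_j=[\pphi_{1,j},\ldots,\pphi_{K,j}]\in\C^{N\times K}$. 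Hence the $\ell_{2,\infty}$-norm equals $\max_{j\in T^{C}}\tfrac1\lambda\|\tPPhi_j^H\P_{T}^{\perp}\n\|_2$, and it suffices to show $\|\tPPhi_j^H\P_{T}^{\perp}\n\|_2^2\leq\lambda^2\gamma^2/4$ for every $j\in T^{C}$, each on an event of failure probability of order $N^{-\alpha}/(M-J)$ so the union over the $M-J$ inactive columns is controlled.

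For fixed $j$, I would apply the quadratic Gaussian tail bound of Proposition~\ref{genquagaussian} with $\H=\tPPhi_j^H\P_{T}^{\perp}$. The key structural fact is that $\H$ is a function of the dictionary $\A$ alone while $\n$ is independent of $\A$, so conditioning on $\A$ freezes $\H$ and leaves $\n$ complex Gaussian, precisely the hypothesis of the proposition. Because Proposition~\ref{genquagaussian} controls the tail purely through $\tr(\mathbf{\Sigma})$ with $\mathbf{\Sigma}=\H^H\H=\P_{T}^{\perp}\tPPhi_j\tPPhi_j^H\P_{T}^{\perp}$, and since $\P_{T}^{\perp}\preceq\I_N$ is a projection, dropping it only increases the trace:
\begin{align*}
\tr(\mathbf{\Sigma})=\tr(\tPPhi_j^H\P_{T}^{\perp}\tPPhi_j)\leq\tr(\tPPhi_j^H\tPPhi_j)=\sum_{i=1}^K\|\pphi_{i,j}\|_2^2.
\end{align*}
Using $\pphi_{i,j}=\diag(\b_i)\a_j$ and the coherence bound $|\B_{ni}|\leq\mu_{max}/\sqrt N$, each summand obeys $\|\pphi_{i,j}\|_2^2\leq\tfrac{\mu_{max}^2}{N}\|\a_j\|_2^2$, giving $\tr(\mathbf{\Sigma})\leq\tfrac{K\mu_{max}^2}{N}\|\a_j\|_2^2$. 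This bound is the crux: it removes the projector entirely and reduces all of the $\A$-dependence to the single scalar $\|\a_j\|_2^2$.

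It then remains to combine two high-probability events. First, $\|\a_j\|_2^2$ is $\chi^2_N$, and the hypothesis $N\geq10\log(M-J)+10\alpha\log N$ forces the deviation parameter $t=\log(M-J)+\alpha\log N$ to satisfy $t\leq N/10$, so $\|\a_j\|_2^2\leq 2N$ holds for all $j\in T^{C}$ simultaneously off an event of probability at most $(M-J)e^{-t}=N^{-\alpha}$, whence $\tr(\mathbf{\Sigma})\leq 2K\mu_{max}^2$. Second, conditioning on this good $\A$, I would invoke Proposition~\ref{genquagaussian} with $\alpha'=\log(M-J)+\alpha\log N>1$, so that $\|\tPPhi_j^H\P_{T}^{\perp}\n\|_2^2\leq\sigma^2[2+(2\sqrt2+2)\alpha']\tr(\mathbf{\Sigma})\leq C_\alpha\sigma^2\mu_{max}^2K[\log(M-J)+\log N]$ for all $j\in T^{C}$ off an event of probability at most $(M-J)e^{-\alpha'}=N^{-\alpha}$; here $C_\alpha$ is linear in $\alpha$ because $\alpha'\leq\alpha[\log(M-J)+\log N]$. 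Demanding this last bound be at most $\lambda^2\gamma^2/4$ yields exactly the stated lower bound on $\lambda$, and summing the two failure probabilities gives $2N^{-\alpha}\leq N^{-\alpha+1}$.

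The hard part will be the statistical dependence between $\P_{T}^{\perp}$ and $\tPPhi_j$ (both functions of the random $\A$) and the quadratic form in the Gaussian noise $\n$. The resolution is the pair of observations above: independence of $\n$ from $\A$ places us in the deterministic-$\H$ regime of Proposition~\ref{genquagaussian} after conditioning on $\A$, and the inequality $\tr(\mathbf{\Sigma})\leq\tfrac{K\mu_{max}^2}{N}\|\a_j\|_2^2$ collapses the operator randomness onto the tractable chi-square variable $\|\a_j\|_2^2$. Everything else is bookkeeping: choosing the tail exponent $\alpha'$ and the chi-square deviation $t$ so that each of the two union bounds over the $M-J$ inactive columns contributes $N^{-\alpha}$, and folding the numerical factors into a single constant $C_\alpha$ that grows linearly in $\alpha$.
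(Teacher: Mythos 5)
Your proposal is correct and takes essentially the same route as the paper's own proof: the block-wise reduction of the $\ell_{2,\infty}$-norm to $\max_{j\in T^C}\|\tPPhi_j^H\P\n\|_2$, the application of Proposition \ref{genquagaussian} after conditioning on $\A$, the trace bound $\tr(\mathbf{\Sigma})\leq \frac{K\mu_{max}^2}{N}\|\a_j\|_2^2$, the chi-square event $\{\max_{j\in T^C}\|\a_j\|_2^2<2N\}$, and union bounds with exponent $\log(M-J)+\alpha\log(N)$ yielding the combined failure probability $2N^{-\alpha}\leq N^{-\alpha+1}$. The only cosmetic difference is that you discard the projector via $\P\preceq\I_N$ and idempotence inside the trace, whereas the paper bounds $\|\P\|\leq 1$ term-by-term in the Frobenius-norm expansion; these are the same estimate.
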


\begin{proof}
$||\tPPhi^H_{T^C}\left(\I_N-\tPPhi_T(\tPPhi_T^H\tPPhi_T)^{-1}\tPPhi_T^H\right)\frac{\n}{\lambda}||_{2,\infty}\leq\frac{\gamma}{2}$ for $\gamma\in(0,1)$ is equivalent to
\begin{align*}
\max_{i\in T^{C}}||\tPPhi^H_{i}\left(\I_N-\tPPhi_T(\tPPhi_T^H\tPPhi_T)^{-1}\tPPhi_T^H\right)\n||^2_{2}\leq\frac{\lambda^2\gamma^2}{4}
\end{align*}
where $\PPhi_{i}$ ($i\in T^C$) is the sub-matrix containing the $[K(i-1)+1]$ to $[K(i-1)+K]$-th columns of $\PPhi$. If we define $\H_i=\tPPhi^H_{i}\left(\I_N-\tPPhi_T(\tPPhi_T^H\tPPhi_T)^{-1}\tPPhi_T^H\right)$, the projection matrix $\P=\left(\I_N-\tPPhi_T(\tPPhi_T^H\tPPhi_T)^{-1}\tPPhi_T^H\right)$, and $\mathbf{\Sigma}=\H_i^H\H_i$, we get
\begin{equation*}
\begin{aligned}
\tr(\mathbf{\Sigma})&=||\H_i||^2_F=||\tPPhi^H_{i}\cdot\P||_F^2=||\B^H\diag(\a_i)^H\cdot\P||_F^2\\
&=||\P\diag(\a_i)\B||_F^2\\
&=||\P\left[ \diag(\b_1)\a_i, \diag(\b_2)\a_i,\cdots, \diag(\b_K)\a_i \right]||_F^2\\
&=\sum_{k=1}^K||\P\diag(\b_k)\a_i||^2_2\leq\sum_{k=1}^K||\P||^2\frac{\mu_{max}^2}{N}||\a_i||^2_2\\
&\leq\frac{\mu_{max}^2K}{N}||\a_i||_2^2.
\end{aligned}
\end{equation*}
Since $\n$ is the additive Gaussian noise vector, applying Proposition \ref{genquagaussian} gives us, for $\alpha_1>1$
\begin{equation}
\begin{aligned}
\text{Pr}\left( ||\H_i\n||_2^2>\frac{\lambda^2\gamma^2}{4}\geq\sigma^2\left[2+(2\sqrt{2}+2)\alpha_1\right]\tr(\mathbf{\Sigma})\right)\leq \\e^{-\alpha_1},
\label{Dn}
\end{aligned}
\end{equation}
in which we need
\begin{equation}
\begin{aligned}
\lambda&\geq\sqrt{\frac{\sigma^2\left[8+(8\sqrt{2}+8)\alpha_1\right]\mu_{max}^2K||\a_i||_2^2}{\gamma^2N}}\\
&\geq\sqrt{\frac{\sigma^2\left[8+(8\sqrt{2}+8)\alpha_1\right]\tr(\mathbf{\Sigma})}{\gamma^2}}.
\end{aligned}
\label{lambda}
\end{equation}
To control the term $||\a_i||_2^2$, we define an event $E=\{\max_{i\in T^{C}}||\a_i||_2^2< 2N\}$. Because each entry of $\a_i\in\R^N$ follows the standard normal distribution, $||\a_i||_2^2$ is a $\chi_N^2$ random variable. According to Lemma 1 in \cite{laurent2000adaptive}, for $\alpha_2>0$
\begin{align*}
\text{Pr}(||\a_i||_2^2\geq2\sqrt{\alpha_2N}+2\alpha_2+N)\leq e^{-\alpha_2}.
\end{align*}
By solving $2N\geq2\sqrt{\alpha_2N}+2\alpha_2+N$, we require $\alpha_2\leq(\frac{2\sqrt{3}-2}{4})^2N\approx 0.1340N$. So for $0<\alpha_2\leq\frac{N}{10}$, we have
\begin{align*}
\text{Pr}(||\a_i||_2^2\geq2N)\leq e^{-\alpha_2}.
\end{align*}
Taking the union over all $i\in T^C$ gives us
\begin{align*}
\text{Pr}(E^C)\leq (M-J)e^{-\alpha_2}
\end{align*}
which is meaningful when $\log(M-J)\leq\alpha_2\leq\frac{N}{10}$.

In addition, if we define another event $F=\{ \max_{i\in T^{C}}||\H_i\n||_2^2>\frac{\lambda^2\gamma^2}{4} \}$, conditioned on $E$ and with
\begin{align}
\lambda\geq\sqrt{\frac{\sigma^2\left[16+(16\sqrt{2}+16)\alpha_1\right]\mu_{max}^2K}{\gamma^2}},
\label{lambda2}
\end{align}
by taking the union of \eqref{Dn} over all $i\in T^C$, we obtain
\begin{align*}
\text{Pr}(F\mid E)\leq (M-J)e^{-\alpha_1}.
\end{align*}
Therefore,
\begin{equation*}
\begin{aligned}
\text{Pr}(F\mid E)+\text{Pr}(E^C)&\leq (M-J)e^{-\alpha_1} +(M-J)e^{-\alpha_2}\\
            &= 2N^{-\alpha}\leq N^{-\alpha+1}
\end{aligned}
\end{equation*}
for $\alpha>1$ by setting $\alpha_1=\alpha_2=\log(M-J)+\alpha\log(N)$. Substituting $\alpha_1$ into \eqref{lambda2} and some simplification yields
\begin{equation*}
\begin{aligned}
\lambda\geq\sqrt{\frac{C_{\alpha}\sigma^2\mu_{max}^2K[\log(M-J)+\log(N)]}{\gamma^2}}
\end{aligned}
\end{equation*}
where $C_{\alpha}=(16\sqrt{2}+16)\alpha+16$ is a constant that grows linearly with $\alpha>1$. Moreover, $\log(M-J)\leq\alpha_2=\log(M-J)+\alpha\log(N)\leq \frac{N}{10}$ requires $N\geq 10\log(M-J)+10\alpha\log(N)$.
Finally, the law of probability implies
\begin{equation*}
\begin{aligned}
&\quad \text{Pr}(\max_{i\in T^{C}}||\tPPhi^H_{i}\left(\I_N-\tPPhi_T(\tPPhi_T^H\tPPhi_T)^{-1}\tPPhi_T^H\right)\n||^2_{2}\leq\frac{\lambda^2\gamma^2}{4})\\
&=\text{Pr}(F^C)\geq\text{Pr}(F^C\cap E)=1-[\text{Pr}(E^C)+\text{Pr}(F\cap E)]\\
&\geq 1-[\text{Pr}(E^C)+\text{Pr}(F| E)]\geq 1-N^{-\alpha+1}.
\end{aligned}
\end{equation*}
\end{proof}

\begin{Lemma}
Conditioned on $\tPPhi_T^H\tPPhi_T$ being invertible and $||(\tPPhi_T^H\tPPhi_T)^{-1}||\leq2$, we have
\begin{align*}
||\tPPhi_{T^C}^H\tPPhi_T(\tPPhi_T^H\tPPhi_T)^{-1}\ts_T||_{2,\infty}\leq\frac{\gamma}{2}
\end{align*}
for $\gamma\in(0,1)$ with probability at least $1-N^{-\alpha}$ when
\begin{align*}
N\geq C_{\alpha}\frac{\mu_{max}^2KJ}{\gamma^2}[\log(M-J)+\log(N)],
\end{align*}
where $C_{\alpha}$ is a constant that grows linearly with $\alpha>1$.
\label{stc2}
\end{Lemma}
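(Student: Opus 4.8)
The plan is to bound the $\ell_{2,\infty}$-norm blockwise, exploiting the independence between the off-support dictionary columns and everything appearing to their right. Writing the target as $\max_{i\in T^C}||\tPPhi_i^H\u||_2$ where $\u=\tPPhi_T(\tPPhi_T^H\tPPhi_T)^{-1}\ts_T\in\C^N$, the first observation is that $\u$ depends only on $\tPPhi_T$, the noise $\n$, and the restricted solution $\thX_T$ (through $\ts_T$), all of which are functions of the support columns $\a_j$, $j\in T$, and $\n$ alone; indeed the data feeding \eqref{restrictedLS} is $\y=\tPPhi_T\text{vec}(\X_{0,T})+\n$, which never sees $\{\a_i\}_{i\in T^C}$. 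Since $\A$ has independent columns, conditioning on $\{\a_j\}_{j\in T}$ and $\n$ freezes $\u$ to a deterministic vector while each $\a_i$, $i\in T^C$, remains an independent standard real Gaussian vector. This structural fact is what unlocks the whole argument.

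Next I would recast each block into a form amenable to Proposition \ref{genquagaussian}. Using $\tPPhi_i=\diag(\a_i)\B$, we have $\tPPhi_i^H\u=\B^H\diag(\a_i)\u=\B^H\diag(\u)\a_i=:\H\a_i$, with $\H=\B^H\diag(\u)$ fixed under the conditioning. The matrix $\mathbf{\Sigma}=\H^H\H$ then satisfies $\tr(\mathbf{\Sigma})=||\H||_F^2=\sum_{n=1}^N|u_n|^2||\B^H\e_n||_2^2\leq\frac{K\mu_{max}^2}{N}||\u||_2^2$, where the inequality uses $||\B^H\e_n||_2^2=\sum_k|B_{nk}|^2\leq K\mu_{max}^2/N$ from the coherence definition. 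To control $||\u||_2^2$, I would compute $||\u||_2^2=\ts_T^H(\tPPhi_T^H\tPPhi_T)^{-1}\ts_T\leq||(\tPPhi_T^H\tPPhi_T)^{-1}||\;||\ts_T||_2^2\leq 2J$, invoking the conditioning hypothesis $||(\tPPhi_T^H\tPPhi_T)^{-1}||\leq2$ together with $||\ts_j||_2\leq1$ for every $j\in T$. This yields the uniform bound $\tr(\mathbf{\Sigma})\leq 2JK\mu_{max}^2/N$, valid on the entire conditioning event.

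With this in hand I would apply the real version of Proposition \ref{genquagaussian} (with noise variance $1$, since the entries of $\a_i$ are standard normal) to each block: for $\alpha_1>1$, $\text{Pr}(||\H\a_i||_2^2>(1+4\alpha_1)\tr(\mathbf{\Sigma}))\leq e^{-\alpha_1}$. Requiring $(1+4\alpha_1)\tr(\mathbf{\Sigma})\leq\gamma^2/4$ — which forces $||\tPPhi_i^H\u||_2\leq\gamma/2$ — translates into $N\geq 8(1+4\alpha_1)JK\mu_{max}^2/\gamma^2$. A union bound over the $M-J$ indices of $T^C$, choosing $\alpha_1=\log(M-J)+\alpha\log(N)$, drives the failure probability to $(M-J)e^{-\alpha_1}=N^{-\alpha}$ and turns the sample requirement into $N\geq C_\alpha\frac{\mu_{max}^2KJ}{\gamma^2}[\log(M-J)+\log(N)]$ with $C_\alpha$ linear in $\alpha$, matching the claim.

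The step I expect to be the main obstacle is making the conditioning rigorous: I must verify that $\u$ is genuinely measurable with respect to $(\{\a_j\}_{j\in T},\n)$ and hence independent of $\{\a_i\}_{i\in T^C}$, and that the bound $\tr(\mathbf{\Sigma})\leq 2JK\mu_{max}^2/N$ holds uniformly on the conditioning event, so that the per-realization tail bound integrates to the stated probability conditioned on the invertibility/norm event. A minor care point is the subgradient normalization: columns of $\thX_T$ that happen to vanish give $||\ts_j||_2\leq1$ rather than equality, but the bound $||\ts_T||_2^2\leq J$ is all that the argument requires.
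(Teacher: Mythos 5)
Your proposal is correct and follows essentially the same route as the paper's proof: rewrite each block as $\H\a_i$ with $\H$ determined by $\v=\tPPhi_T(\tPPhi_T^H\tPPhi_T)^{-1}\ts_T$, bound $\tr(\mathbf{\Sigma})\leq 2\mu_{max}^2KJ/N$ via the coherence and $||(\tPPhi_T^H\tPPhi_T)^{-1}||\leq 2$, apply the real-case quadratic Gaussian tail bound (Proposition~\ref{genquagaussian}), and union-bound over $T^C$ with $\alpha_1=\log(M-J)+\alpha\log(N)$, yielding the same constant $C_\alpha=8(1+4\alpha_1)$-type scaling. Your explicit justification of the conditioning (that $\v$ is measurable with respect to $\{\a_j\}_{j\in T}$ and $\n$, hence independent of $\{\a_i\}_{i\in T^C}$) is a point the paper states only implicitly, and is a welcome refinement rather than a deviation.
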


\begin{proof}
$||\tPPhi_{T^C}^H\tPPhi_T(\tPPhi_T^H\tPPhi_T)^{-1}\ts_T||_{2,\infty}\leq\frac{\gamma}{2}$ for $\gamma\in(0,1)$ can be reformulated as
\begin{equation*}
\begin{aligned}
&\quad\max_{i\in T^C}||\tPPhi_{i}^H\tPPhi_T(\tPPhi_T^H\tPPhi_T)^{-1}\ts_T||^2_{2}\\
&=\max_{i\in T^C}\left|\left|
\begin{bmatrix}
\a_i^H\diag(\b_1)^H\\
\a_i^H\diag(\b_2)^H\\
\vdots\\
\a_i^H\diag(\b_K)^H\\
\end{bmatrix}
\cdot\v
 \right|\right|^2_2\\
&=\max_{i\in T^C}||\v^H[\diag(\b_1)\a_i,\diag(\b_2)\a_i,\cdots,\diag(\b_K)\a_i ] ||^2_2\\
&=\max_{i\in T^C}\left|\left|
\begin{bmatrix}
\v^H\diag(\b_1)\\
\v^H\diag(\b_2)\\
\vdots\\
\v^H\diag(\b_K)
\end{bmatrix}
\a_i\right|\right|^2_2=\max_{i\in T^C}||\H\a_i||^2_2
\leq\frac{\gamma^2}{4}
\end{aligned}
\end{equation*}
where we define $\v=\tPPhi_T(\tPPhi_T^H\tPPhi_T)^{-1}\ts_T\in\C^{N}$ and
\begin{align*}
\H=\begin{bmatrix}
\v^H\diag(\b_1)\\
\v^H\diag(\b_2)\\
\vdots\\
\v^H\diag(\b_K)
\end{bmatrix}\in\C^{K\times N}.
\end{align*}
Therefore, for $\mathbf{\Sigma} = \H^H\H$, we have
\begin{equation}
\begin{aligned}
\tr(\mathbf{\Sigma})=\left|\left|\begin{bmatrix}
\v^H\diag(\b_1)\\
\v^H\diag(\b_2)\\
\vdots\\
\v^H\diag(\b_K)
\end{bmatrix}\right|\right|_F^2&\leq\frac{\mu_{max}^2K}{N}||\v||_2^2\leq \frac{2\mu_{max}^2KJ}{N}
\end{aligned}
\label{tr}
\end{equation}
since
\begin{equation*}
\begin{aligned}
||\v||_2^2&=|\v^H\v|=|\ts_T^H(\tPPhi_T^H\tPPhi_T)^{-1}\tPPhi_T^H\tPPhi_T(\tPPhi_T^H\tPPhi_T)^{-1}\ts_T|\\
&=|\ts_T^H(\tPPhi_T^H\tPPhi_T)^{-1}\ts_T|\leq||(\tPPhi_T^H\tPPhi_T)^{-1}||\cdot||\ts_T||^2_2\leq 2J.
\end{aligned}
\end{equation*}
Because $\a_i\in\R^N$ for $i\in T^C$ is independent of $\tPPhi_T$ and $\a_i$'s entries follow the i.i.d standard normal distribution, Proposition \ref{genquagaussian} implies, for $\alpha_1>1$
\begin{align*}
\text{Pr}\left(||\H\a_i||_2^2>\left(1+4\alpha_1\right)\tr(\mathbf{\Sigma})\right)\leq e^{-\alpha_1}.
\end{align*}
To ensure that $\frac{\gamma^2}{4}\geq\left(1+4\alpha_1\right)\tr(\mathbf{\Sigma})$ , we need
\begin{align}
N\geq\frac{(8+32\alpha_1)\mu_{max}^2KJ}{\gamma^2}.
\label{stc2N}
\end{align}
By taking the union over all $i\in T^C$, we get
\begin{equation*}
\begin{aligned}
\text{Pr}\left(\max_{i\in T^C}||\tPPhi_{i}^H\tPPhi_T(\tPPhi_T^H\tPPhi_T)^{-1}\ts_T||^2_{2}>\frac{\gamma^2}{4}\right)&\leq(M-J)e^{-\alpha_1}\\
&= N^{-\alpha}
\end{aligned}
\end{equation*}
if we set $\alpha_1= \log(M-J)+\alpha\log(N)$ for $\alpha>1$. Substituting the full expression of $\alpha_1$ into \eqref{stc2N} and some simplification yields
\begin{align*}
N\geq C_{\alpha}\frac{\mu_{max}^2KJ}{\gamma^2}[\log(M-J)+\log(N)]
\end{align*}
where $C_{\alpha}=32\alpha+8$ is a constant that grows linearly with $\alpha>1$.
\end{proof}

\subsection{Bounding $||\thX_T-\X_{0,T} ||_{2,\infty}$}
When the support of the unique optimal solution $\hat{\X}$ is contained within the support of the ground truth solution $\X_0$, the recovery error $||\hat{\X}-\X_0||_{2,\infty}=||\thX_T-\X_{0,T} ||_{2,\infty}$. And because the optimal solution on the support, $\thX_T\in \C^{K\times J}$ (we assume $\thX_T\neq\X_{0,T}$, otherwise $||\X_{0,T}- \thX_T ||_{2,\infty}=0$) is attained by solving the support-restricted regularized least square problem \eqref{restrictedLS} whose objective function $f(\text{vec}(\X))=\frac{1}{2}||\y-\PPhi_{T}\cdot\text{vec}(\X)||_2^2+\lambda||\X||_{2,1}$ is strongly convex, since $\frac{1}{2}||\y-\PPhi_{T}\cdot\text{vec}(\X)||_2^2$ is strongly convex conditioned on $\PPhi_{T}^H\PPhi_{T}$ being positive definite and $\lambda||\X||_{2,1}$ is convex, by the property of the strongly convex function, we have
\begin{equation*}
\begin{aligned}
&f(\text{vec}(\thX_T))\geq f(\text{vec}(\X_{0,T}))
+\Re\Big\{ g(\text{vec}(\X_{0,T}))^H\cdot\\
&\qquad\qquad\left[\text{vec}(\thX_T)-\text{vec}(\X_{0,T})\right]\Big\}
+\frac{m}{2}||\thX_T-\X_{0,T}||_F^2
\end{aligned}
\end{equation*}
where $g(\text{vec}(\X_{0,T}))$ is the subgradient of $f(\text{vec}(\X_{0,T}))$. In addition, if we set $\delta=\frac{1}{2}$ in Lemma \ref{isometryproperty}, we have $||(\PPhi^H_T\PPhi_T)^{-1}||\leq 2$ according to Lemma A.12 in \cite{foucart2013mathematical}, which implies $\PPhi_{T}^H\PPhi_{T}\succeq \frac{1}{2}\I$. As a result, $m=\frac{1}{2}$.
Then by the H\"older inequality,
\begin{equation}
\begin{aligned}
f(\text{vec}(\thX_T))&\geq f(\text{vec}(\X_{0,T}))+\Re \Big\{g(\text{vec}(\X_{0,T}))^H\cdot\\
&\quad\left[\text{vec}(\thX_T)-\text{vec}(\X_{0,T})\right]\Big\} +\frac{1}{4}||\thX_T-\X_{0,T}||_F^2\\
&\geq f(\text{vec}(\X_{0,T}))-||g(\text{vec}(\X_{0,T}))||_{2,\infty}\cdot\\
&\quad||\thX_T-\X_{0,T}||_{2,1}
+\frac{1}{4}||\thX_T-\X_{0,T}||_F^2\\
&\geq f(\text{vec}(\X_{0,T}))-||g(\text{vec}(\X_{0,T}))||_{2,\infty}\cdot\\
&\quad||\thX_T-\X_{0,T}||_{2,1}+ \\
&\quad\frac{1}{4\sqrt{J}}||\thX_T-\X_{0,T}||_{2,\infty}||\thX_T-\X_{0,T}||_{2,1}\\
\label{stronglyconvex}
\end{aligned}
\end{equation}
where the $\ell_{2,\infty}$-norm of the subgradient vector is defined in Section \ref{notation}, and the third inequality comes from the fact that $\frac{||\X||_F^2}{||\X||_{2,1}||\X||_{2,\infty}}\geq \frac{1}{\sqrt{J}}$. Because if $||\X||_F^2=L$, one can check that $||\X||_{2,\infty}\leq \sqrt{L}$ and $||\X||_{2,1}\leq \sqrt{LJ}$ where the equality is achieved when the 2-norm of all $J$ columns are the same.

Therefore, since $\thX_T\neq\X_{0,T}$ and $\thX_T$ is the optimal solution, $f(\text{vec}(\thX_T))\leq f(\text{vec}(\X_{0,T}))$, \eqref{stronglyconvex} yields
\begin{equation}
\begin{aligned}
&\quad||\thX_T-\X_{0,T}||_{2,\infty}\leq 4\sqrt{J}||g(\text{vec}(\X_{0,T}))||_{2,\infty}\\
&=4\sqrt{J}||\PPhi_{T}^H\left[\PPhi_{T}\text{vec}(\X_{0,T})-\y\right]+\lambda \ts_{0,T}||_{2,\infty}\\
&=4\sqrt{J}||-\PPhi_{T}^H\n+\lambda \ts_{0,T}||_{2,\infty}\\
&\leq4\sqrt{J}\left(||\PPhi_{T}^H\n||_{2,\infty}+||\lambda \ts_{0,T}||_{2,\infty}\right)\\
&=4\sqrt{J}\left(||\PPhi_{T}^H\n||_{2,\infty}+\lambda\right)
\end{aligned}
\label{2infdeltax}
\end{equation}
where we have used $\y=\PPhi_{T}\text{vec}(\X_{0,T})+\n$ and $||\ts_{0,T}||_{2,\infty}=1$. Now we turn to bound the term $||\PPhi_{T}^H\n||_{2,\infty}$ applying the following lemma.

\begin{Lemma}
Conditioned on $\tPPhi_T^H\tPPhi_T$ being invertible, we have
\begin{align}
||\PPhi_{T}^H\n||_{2,\infty}\leq\sqrt{C_{\alpha}\sigma^2\mu_{max}^2K\left[\log(J)+\log(N)\right]}
\label{phinbound}
\end{align}
with probability at least $1-N^{-\alpha+1}$ when
\begin{align*}
N\geq 10\log(J)+10\alpha\log(N)
\end{align*}
where $C_{\alpha}$ is a constant that grows linearly with $\alpha>1$.
\label{phinlemma}
\end{Lemma}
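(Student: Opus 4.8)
The plan is to follow the template already used in Lemma~\ref{stc1}, replacing the projected operator there by $\PPhi_T^H$ itself and taking the union over the $J$ columns in $T$ rather than the $M-J$ columns in $T^C$; this is precisely what turns the $\log(M-J)$ factor into $\log(J)$. First I would unfold the $\ell_{2,\infty}$-norm: since $\PPhi_T^H\n\in\C^{KJ}$ splits into $J$ blocks of length $K$, with the $j$-th block equal to $\PPhi_j^H\n$, the claimed inequality \eqref{phinbound} is equivalent to $\max_{j\in T}\|\PPhi_j^H\n\|_2^2\le C_{\alpha}\sigma^2\mu_{max}^2K[\log(J)+\log(N)]$, where $\PPhi_j=[\pphi_{1,j},\dots,\pphi_{K,j}]$ and $\pphi_{i,j}=\diag(\b_i)\a_j$.

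Next I would set $\H_j=\PPhi_j^H\in\C^{K\times N}$ and $\mathbf{\Sigma}_j=\H_j^H\H_j$, so that $\tr(\mathbf{\Sigma}_j)=\|\PPhi_j\|_F^2=\sum_{k=1}^K\|\diag(\b_k)\a_j\|_2^2$. The coherence bound $|\B_{ij}|\le\mu_{max}/\sqrt{N}$ gives $\|\diag(\b_k)\a_j\|_2^2\le(\mu_{max}^2/N)\|\a_j\|_2^2$, hence $\tr(\mathbf{\Sigma}_j)\le(\mu_{max}^2K/N)\|\a_j\|_2^2$, mirroring the trace estimate in Lemma~\ref{stc1} but without the contractive projection. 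To remove the dependence on the random quantity $\|\a_j\|_2^2$, I would condition on the event $E=\{\max_{j\in T}\|\a_j\|_2^2<2N\}$; because each $\a_j\in\R^N$ has i.i.d. standard normal entries, $\|\a_j\|_2^2$ is $\chi_N^2$, and Lemma~1 in \cite{laurent2000adaptive} yields $\Pr(\|\a_j\|_2^2\ge2N)\le e^{-\alpha_2}$ for $0<\alpha_2\le N/10$, so a union bound gives $\Pr(E^C)\le Je^{-\alpha_2}$, meaningful when $\log(J)\le\alpha_2\le N/10$. On $E$ we then have $\tr(\mathbf{\Sigma}_j)\le2\mu_{max}^2K$.

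With the trace controlled, I would invoke the complex quadratic Gaussian tail bound (Proposition~\ref{genquagaussian}): conditioned on a realization of $\A$ lying in $E$, for $\alpha_1>1$ we obtain $\Pr\left(\|\H_j\n\|_2^2>\sigma^2[2+(2\sqrt{2}+2)\alpha_1]\tr(\mathbf{\Sigma}_j)\right)\le e^{-\alpha_1}$, and on $E$ the threshold is at most $\sigma^2[4+(4\sqrt{2}+4)\alpha_1]\mu_{max}^2K$. Defining $F=\{\max_{j\in T}\|\H_j\n\|_2^2>C_{\alpha}\sigma^2\mu_{max}^2K[\log(J)+\log(N)]\}$ and taking the union over the $J$ indices gives $\Pr(F\mid E)\le Je^{-\alpha_1}$, provided $C_{\alpha}$ satisfies $C_{\alpha}[\log(J)+\log(N)]\ge4+(4\sqrt{2}+4)\alpha_1$. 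Setting $\alpha_1=\alpha_2=\log(J)+\alpha\log(N)$ makes $C_{\alpha}=(4\sqrt{2}+4)\alpha+4$ admissible, forces the constraint $N\ge10\log(J)+10\alpha\log(N)$ through $\log(J)\le\alpha_2\le N/10$, and yields $\Pr(F\mid E)+\Pr(E^C)\le2Je^{-(\log J+\alpha\log N)}=2N^{-\alpha}\le N^{-\alpha+1}$. The closing probability combination is identical to the end of Lemma~\ref{stc1}, namely $\Pr(F^C)\ge1-[\Pr(E^C)+\Pr(F\mid E)]\ge1-N^{-\alpha+1}$, and $F^C$ is exactly the event \eqref{phinbound}. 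The only genuinely delicate point, everything else being bookkeeping, is keeping the two sources of randomness cleanly separated: the event $E$ depends only on $\A$, while the noise $\n$ is independent of $\A$, so the Gaussian tail bound may legitimately be applied conditionally on $\A\in E$ with $\H_j$ held fixed; the remainder follows the established pattern verbatim.
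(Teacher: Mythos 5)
Your proposal is correct and follows essentially the same route as the paper's own proof: the same block decomposition of $\|\PPhi_T^H\n\|_{2,\infty}$, the same coherence-based trace bound, the same $\chi^2_N$ event $E=\{\max_{j\in T}\|\a_j\|_2^2<2N\}$, the conditional application of Proposition~\ref{genquagaussian}, and the identical choice $\alpha_1=\alpha_2=\log(J)+\alpha\log(N)$ yielding $C_\alpha=(4\sqrt{2}+4)\alpha+4$ and the constraint $N\ge 10\log(J)+10\alpha\log(N)$. Your explicit remark about keeping the randomness in $\A$ (which determines $E$) separate from the independent noise $\n$ is a point the paper leaves implicit, but it matches the paper's intent exactly.
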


\begin{proof}
If we define $\PPhi_{j}$ ($j\in T$) to be the $[K(j-1)+1]$ to $[K(j-1)+K]$-th columns of $\PPhi$, we have $||\PPhi_{T}^H\n||_{2,\infty}=\max_{j\in T}||\PPhi_{j}^H\n||_2$. For an arbitrary $j\in T$, let
$\mathbf{\Sigma}=\PPhi_{j}\PPhi_{j}^H$,
\begin{equation*}
\begin{aligned}
\tr(\mathbf{\Sigma})&=\tr(\PPhi_{j}\PPhi_{j}^H)=\tr(\PPhi_{j}^H\PPhi_{j})\\
&=||\PPhi_{j}||_F^2=||\left[\diag(\b_1)\a_j,\cdots,\diag(\b_K)\a_j\right]||_F^2\\
&=\sum_{k=1}^K ||\diag(\b_k)\a_j||_2^2\leq \frac{\mu_{max}^2K}{N}||\a_j||_2^2.
\end{aligned}
\end{equation*}
If we define an event $E=\{\max_{j\in T}||\a_j||_2^2<2N\}$, in the proof of Lemma \ref{stc1} we have shown that for $0<\alpha_1\leq\frac{N}{10}$
\begin{align*}
\text{Pr}(||\a_i||_2^2\geq2N)\leq e^{-\alpha_1}.
\end{align*}
Taking the union over all $j\in T$ results in
\begin{align*}
\text{Pr}(E^C)\leq Je^{-\alpha_1}
\end{align*}
which is meaningful when $\log(J)\leq\alpha_1\leq\frac{N}{10}$. Therefore, conditioned on $E$, $\tr(\mathbf{\Sigma})<2\mu_{max}^2K$. Applying Proposition \ref{genquagaussian} gives us, for $\alpha_2>1$
\begin{equation*}
\begin{aligned}
\text{Pr}(||\PPhi_{j}^H\n||_2^2>\left[4+(4\sqrt{2}+4)\alpha_2\right]\sigma^2\mu_{max}^2K\mid E)\leq e^{-\alpha_2}.
\end{aligned}
\end{equation*}
Taking the union over all $j\in T$ yields
\begin{equation}
\begin{aligned}
\text{Pr}(\max_{j\in T}||\PPhi_{j}^H\n||_2^2>\left[4+(4\sqrt{2}+4)\alpha_2\right]\sigma^2\mu_{max}^2K\mid E)\leq \\Je^{-\alpha_2}.
\end{aligned}
\label{phin}
\end{equation}
Therefore,
\begin{equation*}
\begin{aligned}
\text{Pr}(\max_{j\in T}||\PPhi_{j}^H\n||_2^2>\left[4+(4\sqrt{2}+4)\alpha_2\right]\sigma^2\mu_{max}^2K\mid E)\\
+\text{Pr}(E^C)\leq Je^{-\alpha_2}+Je^{-\alpha_1}=2N^{-\alpha}\leq N^{-\alpha+1}
\end{aligned}
\end{equation*}
if we set $\alpha_1=\alpha_2=\log(J)+\alpha\log(N)$ for $\alpha>1$. Moreover, $\log(J)\leq\alpha_1=\log(J)+\alpha\log(N)\leq\frac{N}{10}$ requires that $N\geq 10\log(J)+10\alpha\log(N)$. Substituting $\alpha_2=\log(J)+\alpha\log(N)$ into \eqref{phin} and some simplification yields that, for an event
\begin{align*}
F = \Big\{\max_{j\in T}||\PPhi_{j}^H\n||_2>\sqrt{C_{\alpha}\sigma^2\mu_{max}^2K\left[\log(J)+\log(N)\right]}\Big\}
\end{align*}
where $C_{\alpha}=(4\sqrt{2}+4)\alpha+4$, we have $\text{Pr}(F\mid E)+\text{Pr}(E^C)\leq N^{-\alpha+1}$. Therefore,

\begin{equation*}
\begin{aligned}
&\quad\text{Pr}\left(||\PPhi_{T}^H\n||_{2,\infty}\leq\sqrt{C_{\alpha}\sigma^2\mu_{max}^2K\left[\log(J)+\log(N)\right]}\right)\\
&= \text{Pr}(F^C)\geq\text{Pr}(F^C\cap E)=1-[\text{Pr}(E^C)+\text{Pr}(F\cap E)]\\
&\geq 1-[\text{Pr}(E^C)+\text{Pr}(F| E)]\geq 1-N^{-\alpha+1}.
\end{aligned}
\end{equation*}
\end{proof}

\subsection{Proof of Theorem \ref{maintheorem}}
We now sum up the related lemmas to derive the final results in Theorem \ref{maintheorem}. By setting $\delta=\frac{1}{2}$, Lemma \ref{isometryproperty} shows that $\tPPhi_T^H\tPPhi_T$ is invertible and $||(\tPPhi_T^H\tPPhi_T)^{-1}||\leq2$ with probability at least $1-N^{-\alpha+1}$ when $N\geq C_{\alpha,0}\mu_{max}^2KJ\log^2(N)$ for $\alpha>1$.

By applying the same $\alpha$ to Lemma \ref{stc1} and \ref{stc2} and setting $\gamma=\frac{1}{2}$, we can get that, conditioned on $\tPPhi_T^H\tPPhi_T$ being invertible and $||(\tPPhi_T^H\tPPhi_T)^{-1}||\leq2$, $||\s_{T^C}||_{2,\infty}\leq\frac{1}{2}$ which implies that the support of the unique optimal solution $\hat{\X}$ to \eqref{glasso} is contained within the support of the ground truth solution $\X_0$, with probability at least $1-2N^{-\alpha+1}$, when
\begin{align}
\lambda\geq\sqrt{C_{\alpha,2}\sigma^2\mu_{max}^2K[\log(M-J)+\log(N)]}
\label{lambdastc}
\end{align}
and
\begin{align*}
N\geq C_{\alpha,3}\mu_{max}^2JK[\log(M-J)+\log(N)].
\end{align*}

As for the recovery error, we apply the same $\alpha$ to Lemma \ref{phinlemma} and substitute \eqref{phinbound} into \eqref{2infdeltax}. As a result, conditioned on $\tPPhi_T^H\tPPhi_T$ being invertible and the support of the unique optimal solution $\hat{\X}$ being contained within the support of $\X_0$,
\begin{equation}
\begin{aligned}
&\quad||\hat{\X}-\X_{0}||_{2,\infty}\leq4\sqrt{J}\left(||\PPhi_{T}^H\n||_{2,\infty}+\lambda\right)\\
&\leq \sqrt{C_{\alpha}\sigma^2\mu_{max}^2JK\left[\log(J)+\log(N)\right]}+4\sqrt{J}\lambda\\
\end{aligned}
\label{finaldeltax}
\end{equation}
where $C_{\alpha}$ is a constant that grows linearly with $\alpha$, with probability at least $1-N^{-\alpha+1}$ when $N\geq 10\log(J)+10\alpha\log(N)$.

Therefore, after combining the probability and the requirement on $N$ and $\lambda$, we can conclude that, with probability at least $1-4N^{-\alpha+1}$, \eqref{glasso} has a unique optimal solution $\hat{\X}$ with its support contained within the support of the ground truth solution $\X_0$ and the recovery error in terms of $\ell_{2,\infty}$-norm satisfies \eqref{finaldeltax} when $\lambda$ satisfies \eqref{lambdastc} and $N\geq C_{\alpha,1}\mu_{max}^2JK[\log(M-J)+\log^2(N)]$ where $C_{\alpha,1}=\max\{C_{\alpha,0},C_{\alpha,3}\}$ for $\alpha>1$.

\section{Numerical Simulations\label{simulations}}

In this section, we present several numerical simulations to demonstrate and support the theoretical results in Theorem \ref{maintheorem}. In these simulations, each entry of the dictionary $\A\in \R^{N\times M}$ is sampled independently from the standard normal distribution and $\B\in\C^{N\times K}$ contains the first $K$ columns of the normalized $N\times N$ DFT matrix. The real and imaginary components of $c_j\in \C$ and $\h_j\in\C^{K\times 1}$ follow the i.i.d standard normal distribution and the support, $T$ with $|T|=J$, of the ground truth solution $\X_0=[c_1\h_1, \cdots,\c_M\h_M]\in \C^{K\times M}$ is selected uniformly at random. Problem \eqref{glasso} is solved via CVX \cite{grant2008cvx}.

\subsection{Range of $\lambda$ for Exact Support Recovery}

In the first simulation, we determine the effective range of $\lambda$ for exact support recovery. Theoretically, \eqref{lambdabelow} provides a lower bound for $\lambda$ such that Theorem \ref{maintheorem} holds and \eqref{x0below} gives an upper bound to achieve exact support recovery. To verify the bounds of $\lambda$, we define $\gamma_0 = \sqrt{\sigma^2\mu_{max}^2K\left[\log(M-J)+\log(N)\right]}$ and $\gamma = \frac{\gamma_0}{\min_{j\in T}||\x_{0,j}||_2}$. \eqref{lambdabelow} implies that we could set $\lambda=k\gamma_0$ for some $k>0$. In addition, according to \eqref{lambdabelow} and \eqref{x0below} in Theorem \ref{maintheorem}, when all system parameters except $\lambda$ are fixed, to achieve exact support recovery, $\lambda$ should satisfy
\begin{align}
C_1\gamma_0\leq\lambda=k\gamma_0<\frac{\min_{j\in T}||\x_{0,j}||_2-C_2}{C_3}
\end{align}
which is equivalent to
\begin{align*}
C_1\leq k<\frac{C_4}{\gamma}-C_5
\end{align*}
where $C_4=\frac{1}{C_3}$ and $C_5 = \frac{C_2}{C_3\gamma_0}$. To examine this relation, we fix $\sigma=0.1$, $J=K=3$, $N=100$, and $M=150$, and we vary $k$ and $\gamma$. 50 trials are run for each $(k,\gamma)$ pair and we record the exact support recovery rate in Fig. \ref{fig:lambda}, from which we do observe that $k$ should be larger than a constant which is approximately $1.2$ under this setting and that $k$ has a reciprocal relation with $\gamma$.

\begin{figure}[h]
\begin{center}
   \includegraphics[width=1\linewidth]{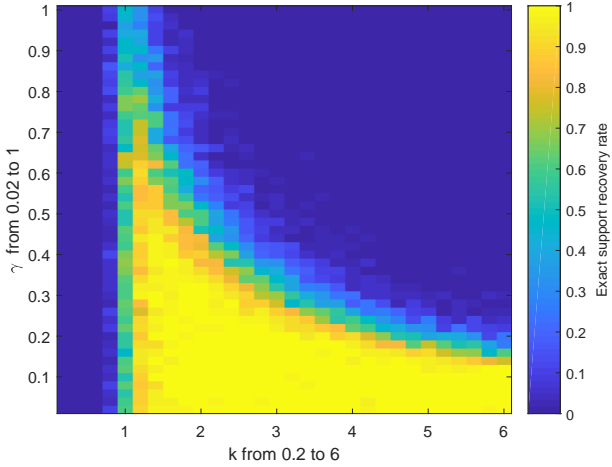}
\end{center}
   \caption{ The relation between $k$ and $\gamma$ in terms of the exact support recovery rate where $\lambda=k\gamma_0$ and $\gamma = \frac{\gamma_0}{\min_{j\in T}||\x_{0,j}||_2}$. }
\label{fig:lambda}
\end{figure}

\subsection{Number of Observations $N$ for Exact Support Recovery}

Equation \eqref{N} in Theorem \ref{maintheorem} indicates that the sufficient number of observations, $N$, scales nearly linearly with respect to the subspace dimension $K$ and the sparsity $J$. To verify that, in the second simulation, we set $M=150$, $k=3$, and $\gamma=0.02$ to make sure that $\lambda$ is in an appropriate range for exact support recovery. We vary $N$ and $K$ (with fixed $J=3$) and record the exact support recovery rate in Fig. \ref{fig:NK}. The result of a similar simulation but varying $N$ and $J$ (with fixed $K=3$) is shown in Fig. \ref{fig:NJ}. 50 simulations are run for each setting. We observe that linear scaling of the number of observations $N$ with the subspace dimension $K$ and the sparsity $J$ is sufficient for exact support recovery.

\begin{figure}[h]
\begin{center}
   \includegraphics[width=1\linewidth]{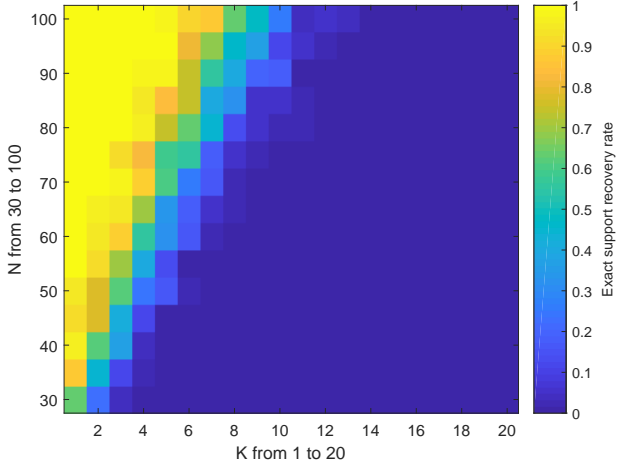}
\end{center}
   \caption{ The nearly linear relation between the number of observations, $N$, and the subspace dimension, $K$, to achieve exact support recovery. }
\label{fig:NK}
\end{figure}

\begin{figure}[h]
\begin{center}
   \includegraphics[width=1\linewidth]{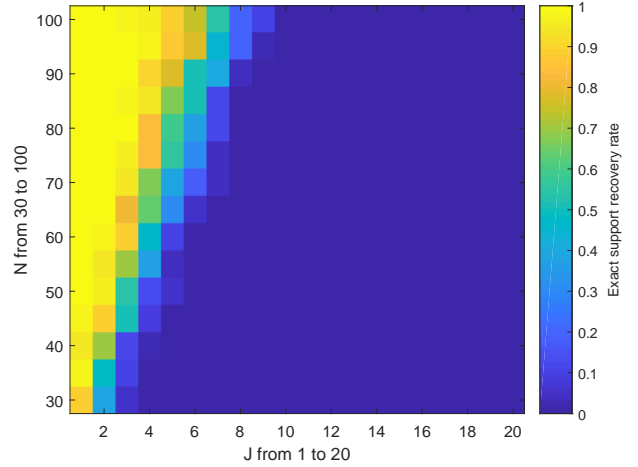}
\end{center}
   \caption{ The nearly linear relation between the number of observations, $N$, and the sparsity, $J$, to achieve exact support recovery. }
\label{fig:NJ}
\end{figure}

\subsection{Recovery Error Bound}
Next we turn to verify the recovery error bound in \eqref{errorbound}, which scales linearly with respect to $\lambda$ and nearly linearly with respect to $\sqrt{J}$. We set $K=3$, $N=100$, $M=150$, and $\gamma=0.02$. In Fig. \ref{fig:RElambda}, we use $\lambda=k\gamma_0$  (with fixed $J=3$) and vary $k$ within the proper range for exact support recovery based on Fig. \ref{fig:lambda}. 100 trials are run for each $k$ and we calculate the mean and standard deviation of the recovery error $||\hat{\X}-\X_0||_{2,\infty}$. Note that the recovery error is counted only when the exact support recovery is achieved. In this figure, we do observe linear scaling of the error with $\lambda$.

Similarly, we vary $J$ (with fixed $\lambda=3\gamma_0$) within the proper range for exact support recovery based on Fig. \ref{fig:NJ} and record the squared recovery error $||\hat{\X}-\X_0||_{2,\infty}^2$ in Fig. \ref{fig:REJ}. Again, the squared recovery error is counted only when the exact support recovery is achieved. In this figure, we can observe nearly linear scaling of the squared error with $J$.

\begin{figure}[h]
\begin{center}
   \includegraphics[width=1\linewidth]{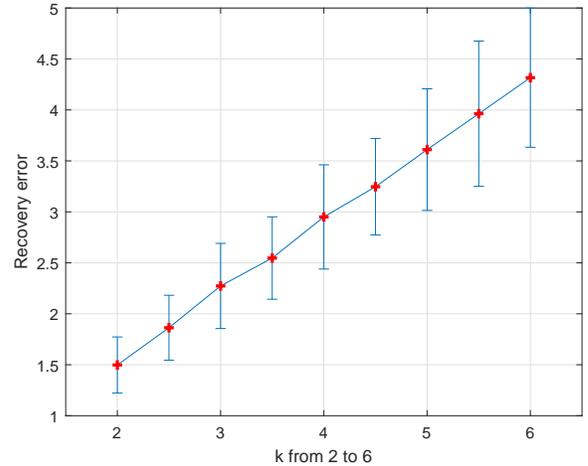}
\end{center}
   \caption{The linear relation between the recovery error, $||\hat{\X}-\X_0||_{2,\infty}$, and the regularization parameter $\lambda=k\gamma_0$. The red plus signs and the blue horizontal sticks indicates the mean and standard deviation of the recovery error.}
\label{fig:RElambda}
\end{figure}

\begin{figure}[h]
\begin{center}
   \includegraphics[width=1\linewidth]{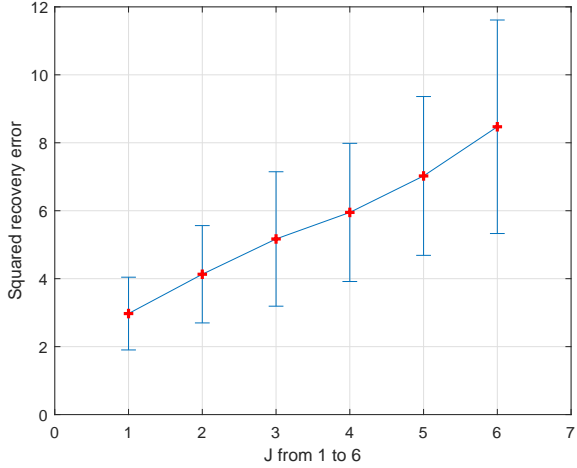}
\end{center}
   \caption{The nearly linear relation between the squared recovery error, $||\hat{\X}-\X_0||^2_{2,\infty}$, and the sparsity $J$. The red plus signs and the blue horizontal sticks indicates the mean and standard deviation of the squared recovery error.}
\label{fig:REJ}
\end{figure}

\subsection{Single Molecule Imaging\label{SMI}}

In this experiment, we apply our signal model \eqref{origin} to  the single molecule imaging problem and achieve super-resolution by solving \eqref{glasso}. In molecule imaging via stochastic optical reconstruction microscopy (STORM) \cite{rust2006sub}, the sub-cellular structures are dyed using fluorophores, and during each observation only a small portion of the fluorophores are activated. Moreover, fluorophores at different depths will undergo different degrees of blurring.

Consequently, each observed image frame consists of a few activated fluorophores convolved with the non-stationary Gaussian point spread functions of the microscope as shown in Fig. \ref{fig:PSF} (a). Specifically, the observed low resolution frame is of size $64\times64$ pixels and each pixel corresponds to a region of size 100$\times$100 nm. The goal is to construct a target image with $320\times320$ pixels with each pixel corresponding to a region of size $20\times20$ nm.

If we vectorize the frames, each observed low resolution frame can be represented as
\begin{align}
\y=Sample\left[\sum_{j=1}^M c_j(\B'\h_j)\circledast\e_j +\n'\right] \in\R^{N\times 1}
\label{molecule}
\end{align}
where $Sample[\cdot]$ indicates the sub-sampling operator, $N=64\times64=4096$, and $M=320\times320=102400$. Moreover, $c_j$ is the unknown fluorophore intensity at the $j$-th position, $\B'$ models the subspace containing the non-stationary Gaussian point spread functions (with unknown coefficient vector $\h_j$ for the $j$-th position), $\e_j\in\R^{M}$ is the $j$-th column of the identity matrix, and $\n'$ is the unknown additive noise. All observed frames, $\y$, come from the Single-Molecule Localization Microscopy grand challenge organized by ISBI \cite{ISBI2013}. The dataset contains 12000 low resolution frames, and the maximum number of activated fluorophores in each frame is 18 which implies that at most $J=18$ coefficients $c_j$ are non-zero for each $\y$.


To apply our model, we must construct the subspace, $\B'$, to capture the non-stationary point spread functions. By changing the variances (widths), we generate nine different Gaussian point spread functions; four examples are shown in Fig. \ref{fig:PSF} (b). We then apply the singular value decomposition (SVD) to a matrix of the vectorized point spread functions and record their singular values in Fig. \ref{fig:PSF} (c). From this we see that the point spread functions approximately live in a 3 dimensional subspace. Therefore, we set $K=3$ and let $\B'$ contain the singular vectors corresponding to the 3 largest singular values. We display the corresponding singular vectors in Fig. \ref{fig:PSF} (d).

\begin{figure}[h]
   \centering
   \subfloat[][An observed frame.]{\includegraphics[width=.2\textwidth]{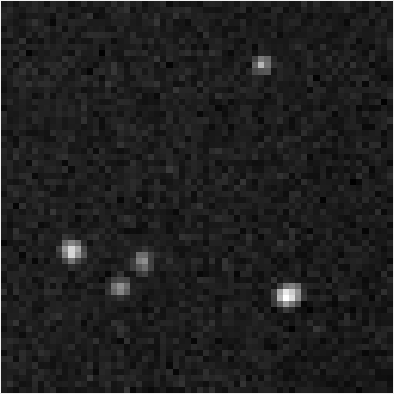}}
      \qquad
   \subfloat[][Point spread functions.]{\includegraphics[width=.2\textwidth]{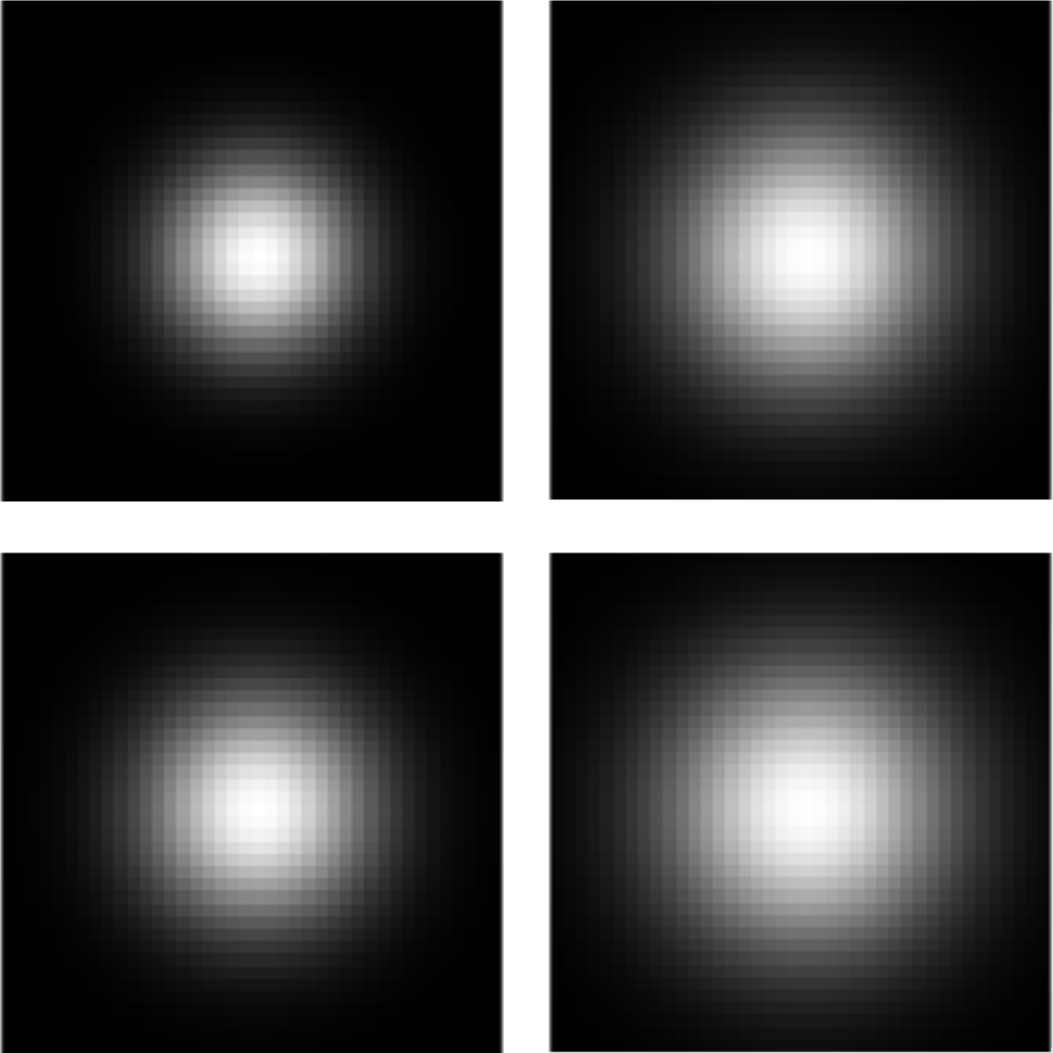}}
\\
      \subfloat[][The singular values.]{\includegraphics[width=.2\textwidth]{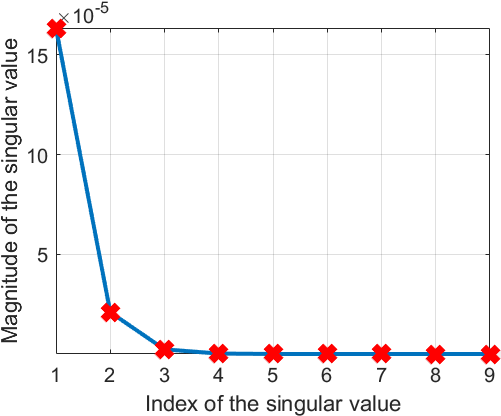}}
   \qquad
   \subfloat[][The singular vectors.]{\includegraphics[width=.2\textwidth]{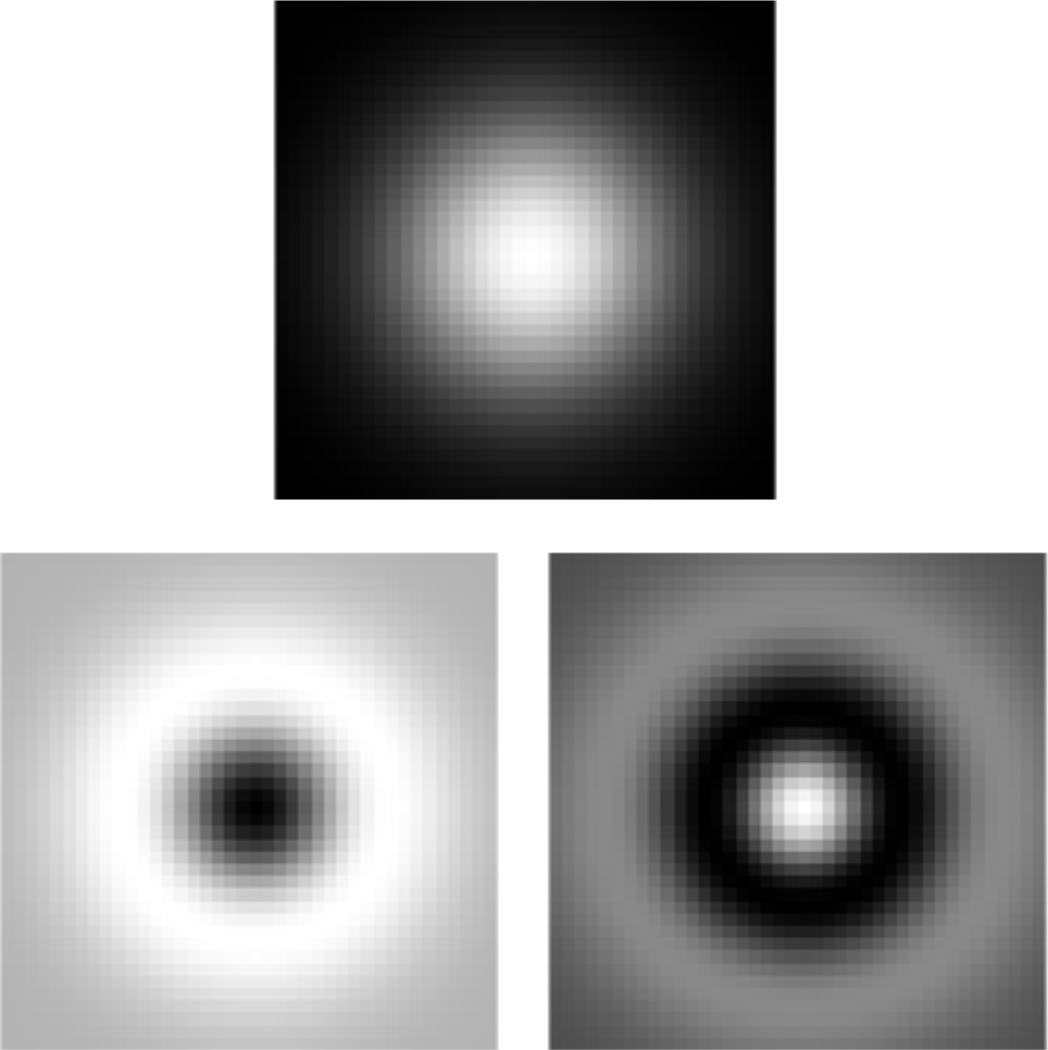}}
   \caption{The analysis of point spread functions. (a) A typical observed frame is of size $64\times64$ pixels with each pixel corresponding to a region of size $100\times100$ nm. (b) Four examples of the non-stationary point spread functions. (c) The singular values of the point spread functions. (d) The singular vectors corresponding to the three largest singular values.}
   \label{fig:PSF}
\end{figure}

To better illustrate the connection between the single molecule imaging problem and the signal model we study, \eqref{molecule} can be equivalently represented as
\begin{align*}
\y=Sample\Bigg\{IDFT\left[\sum_{j=1}^M c_j\D_j\a_j +\n\right]\Bigg\} \in\R^{N\times 1}
\end{align*}
where $IDFT[\cdot]$ denotes the inverse discrete Fourier transform operator, $\D_j=\diag(\B\h_j)$ where $\B=DFT[\B']$, $\a_j$s are the DFTs of spikes at all possible spatial locations, and $\n=DFT[\n']$. In this case, if we represent $\y=\L(\X)$ with $\X =[c_1\h_1,\cdots,c_M\h_M]$, the linear operator $\L$ incorporates additional inverse Fourier transform and sub-sample operators, and $\A$ is a Fourier dictionary instead of random Gaussian. The noise $\n$, $\h_j$, and $c_j$ for all $j$ are unknown, and the indices of the non-zero columns in $\X$ indicate the locations of the activated fluorophores in the high resolution image.


We pre-process each low resolution frame by subtracting the average intensity of the data set, and superimposing all the frame results in the low resolution image in Fig. \ref{fig:superresolution} (a). Moreover, we solve \eqref{glasso} for each observed low resolution frame via SpaRSA \cite{wright2009sparse}. By superimposing all the high resolution images that we get, we obtain the super-resolution result in Fig. \ref{fig:superresolution} (b). Although the dictionary is not Gaussian in this application, the superior super-resolution result verifies the effectiveness of the proposed signal model and minimization problem.

\begin{figure}[h]
   \centering
      \subfloat[][Low resolution input.]{\includegraphics[width=.2\textwidth]{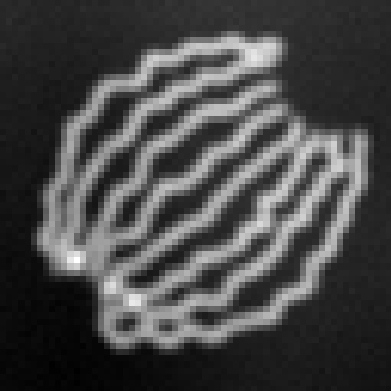}}
   \qquad
   \subfloat[][Super-resolution result.]{\includegraphics[width=.2\textwidth]{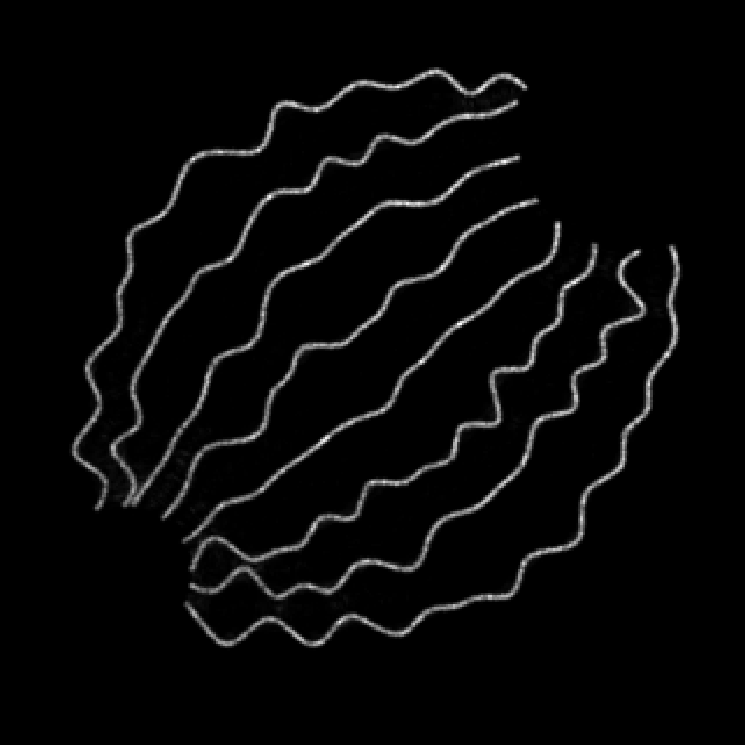}}
   \caption{The single molecule imaging experiment. The image in (a) is of size $64\times64$ pixels with each pixel corresponding to a region of size $100\times100$ nm. (b) shows the super-resolution result, which has size $320\times320$ pixels with each pixel corresponding to a region of size $20\times20$ nm.}
   \label{fig:superresolution}
\end{figure}

Finally, when $K=1$, \eqref{glasso} degenerates to the classical $\ell_1$-norm constrained lasso problem which has been comprehensively studied. However, by choosing $K=1$, the model sacrifices its ability to capture non-stationary modulation, which is significant in this problem when the point spread functions have several comparable singular values. Although in our case, we happen to have one dominant singular value as shown in Fig. \ref{fig:PSF} (c), which implies that super-resolution can be attempted with $K=1$, we see that a larger $K$ still benefits the super-resolution process. To demonstrate this, we run the single molecule imaging experiments again using $K=3$ and $K=1$. Three super-resolution examples are shown in Fig. \ref{fig:compare}, from which we can find that although $K=3$ and $K=1$ achieve similar performance, some activated fluorophores can be more accurately represented using the 3-dimensional subspace ($K=3$), and that leads to a more clear and accurate super-resolution result.


\begin{figure}[h]
   \centering
   \subfloat[][Input frame.]{\includegraphics[width=.1\textwidth]{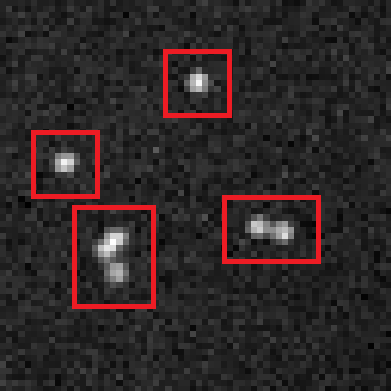}}
      \quad
   \subfloat[][Result for $K=3$.]{\includegraphics[width=.17\textwidth]{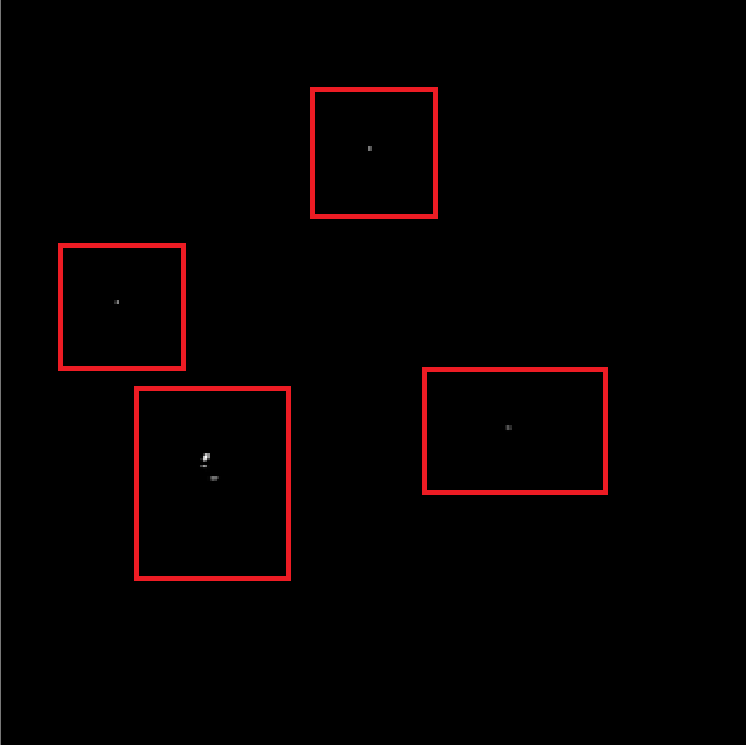}}
   \quad
      \subfloat[][Result for $K=1$.]{\includegraphics[width=.17\textwidth]{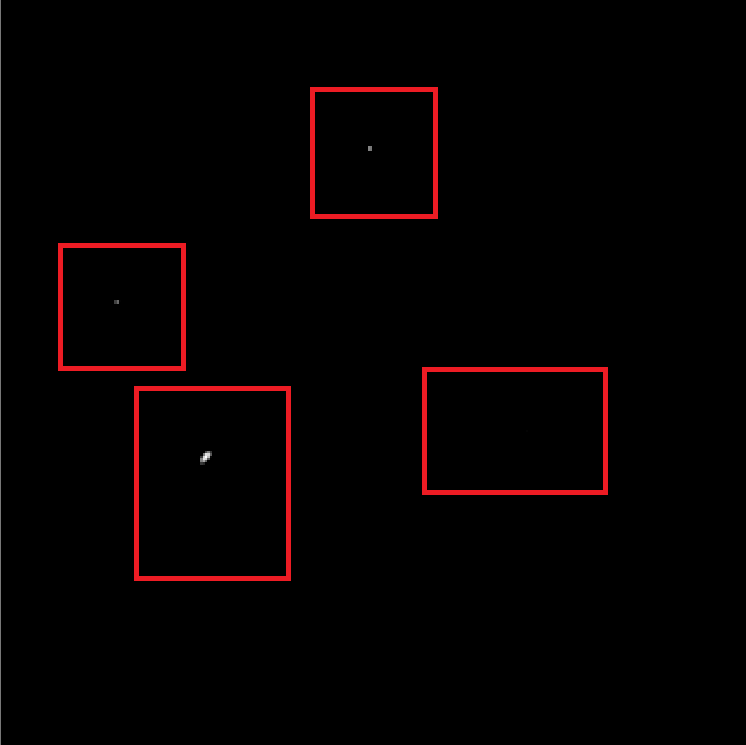}}
\\
   \subfloat[][Input frame.]{\includegraphics[width=.1\textwidth]{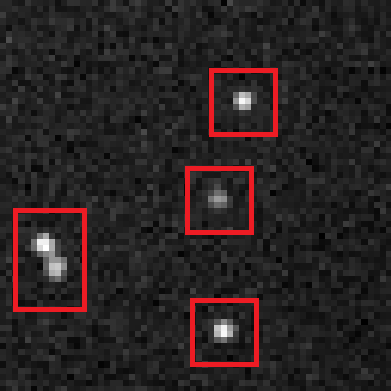}}
      \quad
   \subfloat[][Result for $K=3$.]{\includegraphics[width=.17\textwidth]{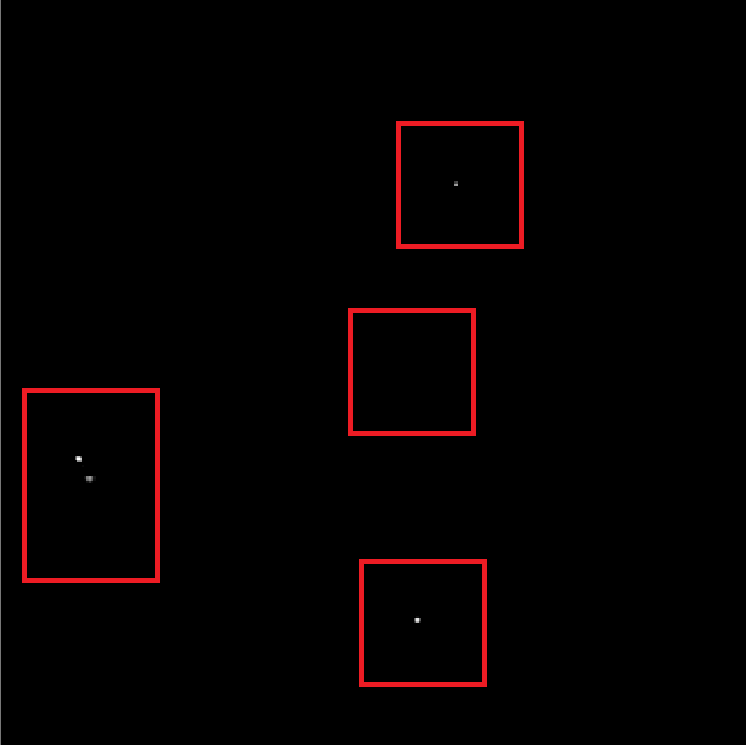}}
   \quad
      \subfloat[][Result for $K=1$.]{\includegraphics[width=.17\textwidth]{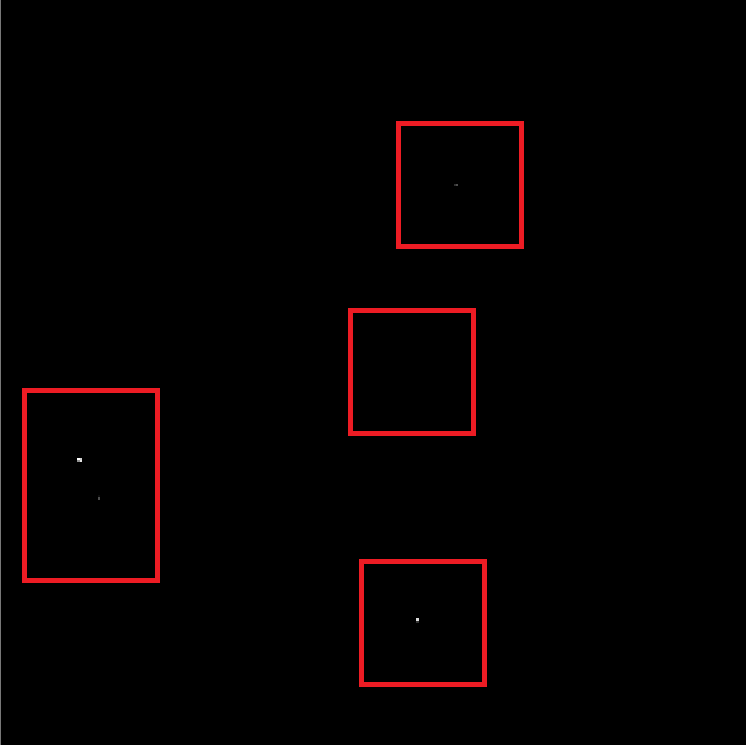}}
\\
   \subfloat[][Input frame.]{\includegraphics[width=.1\textwidth]{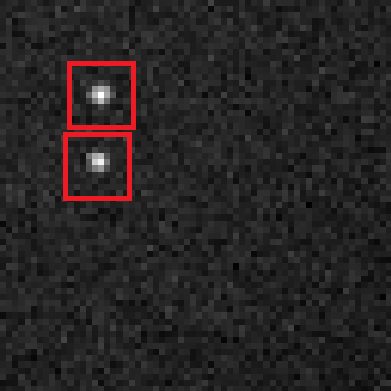}}
      \quad
   \subfloat[][Result for $K=3$.]{\includegraphics[width=.17\textwidth]{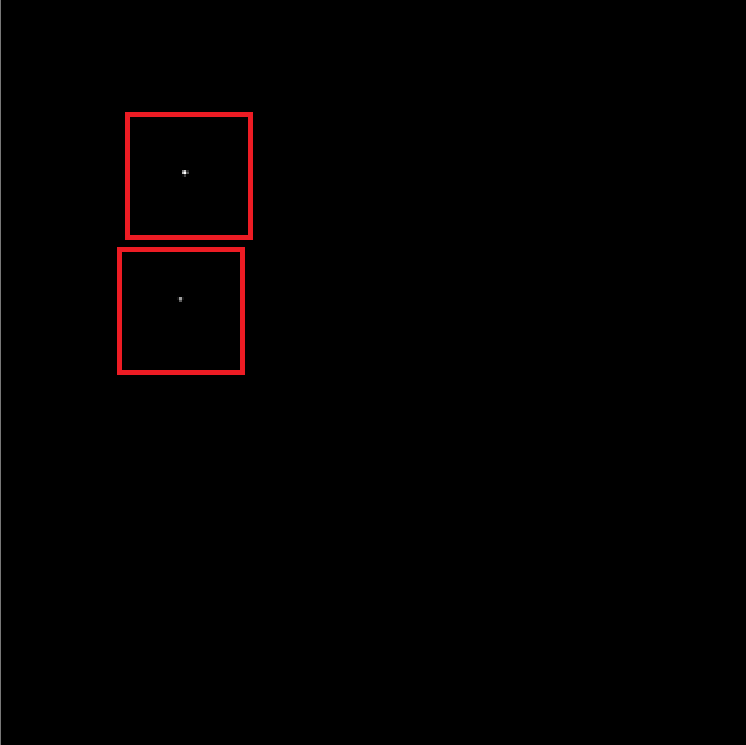}}
   \quad
      \subfloat[][Result for $K=1$.]{\includegraphics[width=.17\textwidth]{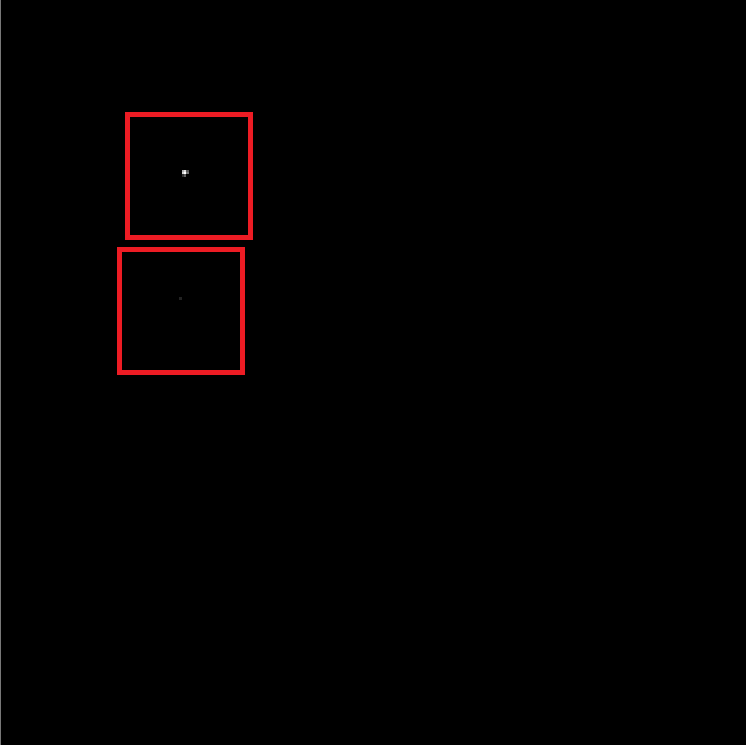}}
   \caption{Comparison between the super-resolution results for $K=3$ and $K=1$. (a), (d), and (g) are three low-resolution input frames. (b), (e), and (h) show the super-resolution results for $K=3$. (c), (f), and (i) show the super-resolution results for $K=1$. The area of interest is highlighted using the red rectangle. The input frames are of size $64\times64$ pixels and the outputs are $320\times320$ pixels.}
   \label{fig:compare}
\end{figure}









\section{Conclusion\label{conclusion}}

In this paper, we consider the problem of recovering a sparse signal with unbounded noise and non-stationary blind modulation. Using the lifting technique and with a subspace assumption on the modulating signals, we recast this problem as the recovery of a column-wise sparse matrix from structured linear observations. We apply $\ell_{2,1}$-norm regularized quadratic minimization, also known as the group lasso, to solve this problem and derive sufficient conditions on the sample complexity and regularization parameter for exact support recovery. We also bound the recovery error in terms of the $\ell_{2,\infty}$-norm. Numerical simulations are consistent with our predictions and support the theoretical results. Moreover, we apply our model to single molecule imaging and achieve promising super-resolution results. One useful generalization of the results in this paper would be to consider the random Fourier dictionary. Allowing $\D_j$ to be non-diagonal but live in a low-dimensional matrix subspace is another important generalization, which could open up other potential applications.

\section*{Acknowledgment}

This work was supported by NSF grant CCF$-1704204$. 



%
\bibliographystyle{ieeetr}
\bibliography{IEEEabrv,test}

\begin{thebibliography}{10}

\bibitem{xie2019asilomar}
Y.~Xie, M.~B. Wakin, and G.~Tang, ``Support recovery for sparse recovery and
  non-stationary blind demodulation,'' in {\em 2019 53rd Asilomar Conference on
  Signals, Systems and Computers}, IEEE, 2019.

\bibitem{mansour2018radar}
H.~Mansour, D.~Liu, P.~T. Boufounos, and U.~S. Kamilov, ``Radar autofocus using
  sparse blind deconvolution,'' in {\em 2018 IEEE International Conference on
  Acoustics, Speech and Signal Processing (ICASSP)}, pp.~1623--1627, IEEE,
  2018.

\bibitem{lustig2007sparse}
M.~Lustig, D.~Donoho, and J.~M. Pauly, ``Sparse mri: The application of
  compressed sensing for rapid mr imaging,'' {\em Magnetic Resonance in
  Medicine: An Official Journal of the International Society for Magnetic
  Resonance in Medicine}, vol.~58, no.~6, pp.~1182--1195, 2007.

\bibitem{mansour2012adaptive}
H.~Mansour and {\"O}.~Yilmaz, ``Adaptive compressed sensing for video
  acquisition,'' in {\em 2012 IEEE International Conference on Acoustics,
  Speech and Signal Processing (ICASSP)}, pp.~3465--3468, IEEE, 2012.

\bibitem{candes2008introduction}
E.~J. Cand{\`e}s and M.~B. Wakin, ``An introduction to compressive sampling,''
  {\em IEEE Signal Processing Magazine}, vol.~25, no.~2, pp.~21--30, 2008.

\bibitem{candes2011probabilistic}
E.~J. Candes and Y.~Plan, ``A probabilistic and ripless theory of compressed
  sensing,'' {\em IEEE Transactions on Information Theory}, vol.~57, no.~11,
  pp.~7235--7254, 2011.

\bibitem{xie2017radar}
Y.~Xie, S.~Li, G.~Tang, and M.~B. Wakin, ``Radar signal demixing via convex
  optimization,'' in {\em 2017 22nd International Conference on Digital Signal
  Processing (DSP)}, pp.~1--5, IEEE, 2017.

\bibitem{xie2019simultaneous}
Y.~Xie, M.~B. Wakin, and G.~Tang, ``Simultaneous sparse recovery and blind
  demodulation,'' {\em IEEE Transactions on Signal Processing}, vol.~67,
  no.~19, pp.~5184--5199, 2019.

\bibitem{ling2015self}
S.~Ling and T.~Strohmer, ``Self-calibration and biconvex compressive sensing,''
  {\em Inverse Problems}, vol.~31, no.~11, p.~115002, 2015.

\bibitem{ahmed2014blind}
A.~Ahmed, B.~Recht, and J.~Romberg, ``Blind deconvolution using convex
  programming,'' {\em IEEE Transactions on Information Theory}, vol.~60, no.~3,
  pp.~1711--1732, 2014.

\bibitem{zou2006adaptive}
H.~Zou, ``The adaptive lasso and its oracle properties,'' {\em Journal of the
  American statistical association}, vol.~101, no.~476, pp.~1418--1429, 2006.

\bibitem{yuan2006model}
M.~Yuan and Y.~Lin, ``Model selection and estimation in regression with grouped
  variables,'' {\em Journal of the Royal Statistical Society: Series B
  (Statistical Methodology)}, vol.~68, no.~1, pp.~49--67, 2006.

\bibitem{friedlander2014bilinear}
B.~Friedlander and T.~Strohmer, ``Bilinear compressed sensing for array
  self-calibration,'' in {\em 2014 48th Asilomar Conference on Signals, Systems
  and Computers}, pp.~363--367, IEEE, 2014.

\bibitem{yang2016super}
D.~Yang, G.~Tang, and M.~B. Wakin, ``Super-resolution of complex exponentials
  from modulations with unknown waveforms,'' {\em IEEE Transactions on
  Information Theory}, vol.~62, no.~10, pp.~5809--5830, 2016.

\bibitem{tang2013sparse}
G.~Tang, B.~N. Bhaskar, and B.~Recht, ``Sparse recovery over continuous
  dictionaries-just discretize,'' in {\em 2013 Asilomar Conference on Signals,
  Systems and Computers}, pp.~1043--1047, IEEE, 2013.

\bibitem{ling2017blind}
S.~Ling and T.~Strohmer, ``Blind deconvolution meets blind demixing: Algorithms
  and performance bounds,'' {\em IEEE Transactions on Information Theory},
  vol.~63, no.~7, pp.~4497--4520, 2017.

\bibitem{chi2016guaranteed}
Y.~Chi, ``Guaranteed blind sparse spikes deconvolution via lifting and convex
  optimization,'' {\em J. Sel. Topics Signal Processing}, vol.~10, no.~4,
  pp.~782--794, 2016.

\bibitem{ling2018self}
S.~Ling and T.~Strohmer, ``Self-calibration and bilinear inverse problems via
  linear least squares,'' {\em SIAM Journal on Imaging Sciences}, vol.~11,
  no.~1, pp.~252--292, 2018.

\bibitem{aghasi2018convex}
A.~Aghasi, A.~Ahmed, P.~Hand, and B.~Joshi, ``A convex program for bilinear
  inversion of sparse vectors,'' in {\em Advances in Neural Information
  Processing Systems}, pp.~8548--8558, 2018.

\bibitem{asif2009random}
M.~S. Asif, W.~Mantzel, and J.~Romberg, ``Random channel coding and blind
  deconvolution,'' in {\em 2009 47th Annual Allerton Conference on
  Communication, Control, and Computing (Allerton)}, pp.~1021--1025, IEEE,
  2009.

\bibitem{xie2019SPARSE}
Y.~Xie, M.~B. Wakin, and G.~Tang, ``Sparse recovery and non-stationary blind
  demodulation,'' in {\em ICASSP 2019-2019 IEEE International Conference on
  Acoustics, Speech and Signal Processing (ICASSP)}, pp.~5566--5570, IEEE,
  2019.

\bibitem{nardi2008asymptotic}
Y.~Nardi, A.~Rinaldo, {\em et~al.}, ``On the asymptotic properties of the group
  lasso estimator for linear models,'' {\em Electronic Journal of Statistics},
  vol.~2, pp.~605--633, 2008.

\bibitem{lv2011group}
X.~Lv, G.~Bi, and C.~Wan, ``The group lasso for stable recovery of block-sparse
  signal representations,'' {\em IEEE Transactions on Signal Processing},
  vol.~59, no.~4, pp.~1371--1382, 2011.

\bibitem{huang2010benefit}
J.~Huang, T.~Zhang, {\em et~al.}, ``The benefit of group sparsity,'' {\em The
  Annals of Statistics}, vol.~38, no.~4, pp.~1978--2004, 2010.

\bibitem{sivakumar2015beyond}
V.~Sivakumar, A.~Banerjee, and P.~K. Ravikumar, ``Beyond sub-gaussian
  measurements: High-dimensional structured estimation with sub-exponential
  designs,'' in {\em Advances in neural information processing systems},
  pp.~2206--2214, 2015.

\bibitem{vaiter2015model}
S.~Vaiter, M.~Golbabaee, J.~Fadili, and G.~Peyr{\'e}, ``Model selection with
  low complexity priors,'' {\em Information and Inference: A Journal of the
  IMA}, vol.~4, no.~3, pp.~230--287, 2015.

\bibitem{vaiter2017model}
S.~Vaiter, G.~Peyr{\'e}, and J.~Fadili, ``Model consistency of partly smooth
  regularizers,'' {\em IEEE Transactions on Information Theory}, vol.~64,
  no.~3, pp.~1725--1737, 2017.

\bibitem{hung2017low}
C.-Y. Hung and M.~Kaveh, ``Low rank matrix recovery for joint array
  self-calibration and sparse model doa estimation,'' in {\em 2017 IEEE 7th
  International Workshop on Computational Advances in Multi-Sensor Adaptive
  Processing (CAMSAP)}, pp.~1--5, IEEE, 2017.

\bibitem{eldar2017sensor}
Y.~C. Eldar, W.~Liao, and S.~Tang, ``Sensor calibration for off-the-grid
  spectral estimation,'' {\em Applied and Computational Harmonic Analysis},
  2018.

\bibitem{flinth2018sparse}
A.~Flinth, ``Sparse blind deconvolution and demixing through
  $\ell_{1,2}$-minimization,'' {\em Advances in Computational Mathematics},
  vol.~44, no.~1, pp.~1--21, 2018.

\bibitem{hastie2015statistical}
T.~Hastie, R.~Tibshirani, and M.~Wainwright, {\em Statistical learning with
  sparsity: the lasso and generalizations}.
\newblock Chapman and Hall/CRC, 2015.

\bibitem{wainwright2009sharp}
M.~J. Wainwright, ``Sharp thresholds for high-dimensional and noisy sparsity
  recovery using $\ell_1$-constrained quadratic programming (lasso),'' {\em
  IEEE Transactions on Information Theory}, vol.~55, no.~5, pp.~2183--2202,
  2009.

\bibitem{foucart2013mathematical}
S.~Foucart and H.~Rauhut, {\em A mathematical introduction to compressive
  sensing}.
\newblock Birkh{\"a}user Basel, 2013.

\bibitem{hsu2012tail}
D.~Hsu, S.~Kakade, T.~Zhang, {\em et~al.}, ``A tail inequality for quadratic
  forms of subgaussian random vectors,'' {\em Electronic Communications in
  Probability}, vol.~17, 2012.

\bibitem{laurent2000adaptive}
B.~Laurent and P.~Massart, ``Adaptive estimation of a quadratic functional by
  model selection,'' {\em Annals of Statistics}, pp.~1302--1338, 2000.

\bibitem{grant2008cvx}
M.~Grant, S.~Boyd, and Y.~Ye, ``Cvx: Matlab software for disciplined convex
  programming,'' 2008.

\bibitem{rust2006sub}
M.~J. Rust, M.~Bates, and X.~Zhuang, ``Sub-diffraction-limit imaging by
  stochastic optical reconstruction microscopy (storm),'' {\em Nature methods},
  vol.~3, no.~10, p.~793, 2006.

\bibitem{ISBI2013}
EPFL-Biomedical-Imaging-Group, ``Single-molecule localization microscopy,''
  {\em http://bigwww.epfl.ch/smlm/}, 2013.

\bibitem{wright2009sparse}
S.~J. Wright, R.~D. Nowak, and M.~A. Figueiredo, ``Sparse reconstruction by
  separable approximation,'' {\em IEEE Transactions on Signal Processing},
  vol.~57, no.~7, pp.~2479--2493, 2009.

\end{thebibliography}

%




\end{document}